\theoremstyle{plain}
\newtheorem{thm}{Theorem}
\newtheorem{prop}[thm]{Proposition}
\theoremstyle{definition}
\theoremstyle{remark}
\begin{document}

\title{Flexible Functional Split and Power Control for Energy Harvesting Cloud Radio Access Networks}

\author{Liumeng Wang, Sheng Zhou
\thanks{Part of this work has been presented in IEEE ICC 2018 Workshop \cite{mine}. This work is sponsored in part by the National Key R\&D Program of China 2018YFB1800804, Nature Science Foundation of China (No. 61571265, No. 61871254, No. 91638204, No. 61861136003, No. 61621091), and Hitachi Ltd. }
\thanks{Liumeng Wang and Sheng Zhou are with Beijing National Research Center for Information Science and Technology,
        Department of Electronic Engineering, Tsinghua University, Beijing 100084, China (Email: wlm14@mails.tsinghua.edu.cn, sheng.zhou@tsinghua.edu.cn).}}

\maketitle

\begin{abstract}
Functional split is a promising technique to flexibly balance the processing cost at remote ends and the fronthaul rate in cloud radio access networks (C-RAN). By harvesting renewable energy, remote radio units (RRUs) can save grid power and be flexibly deployed. However, the randomness of energy arrival poses a major design challenge. To maximize the throughput under the average fronthaul rate constraint in C-RAN with renewable powered RRUs, we first study the offline problem of selecting the optimal functional split modes and the corresponding durations, jointly with the transmission power. We find that between successive energy arrivals, at most two functional split modes should be selected. Then the optimal online problem is formulated as an Markov decision process (MDP). To deal with the curse of dimensionality of solving MDP, we further analyze the special case with one instance of energy arrival and two candidate functional split modes as inspired by the offline solution, and then a heuristic online policy is proposed. Numerical results show that with flexible functional split, the throughput can be significantly improved compared with fixed functional split. Also, the proposed heuristic online policy has similar performance with the optimal online one, as validated by simulations.
\end{abstract}

\begin{IEEEkeywords}
Functional split, energy harvesting, cloud radio access network (C-RAN), fronthaul, Markov decision process (MDP).
\end{IEEEkeywords}

\section{Introduction}
\label{sec:intro}
Cloud radio access network (C-RAN) \cite{CRANwhitepaper}, which centralizes the baseband functions at the baseband units (BBUs), can efficiently reduce the complexity of the remote radio units (RRUs), and thus the operation and deployment costs. Centralized baseband processing also enables efficient cooperative signal processing to increase the network capacity. In C-RAN, the fronthaul network transports the baseband signals between the BBUs and the RRUs. However, for fully centralized C-RAN, i.e., all baseband functions are centralized at the BBUs, the fronthaul rate requirement is high, which poses a major design challenge on C-RAN. For example, in a single 20MHz LTE antenna-carrier system, 1Gbps fronthaul rate is required with the standard CPRI interface \cite{CPRI}. To support massive MIMO and other emerging technologies, the required fronthaul rate will be too high to bear.

\begin{figure}
	\centering
	\includegraphics[width=0.65\textwidth]{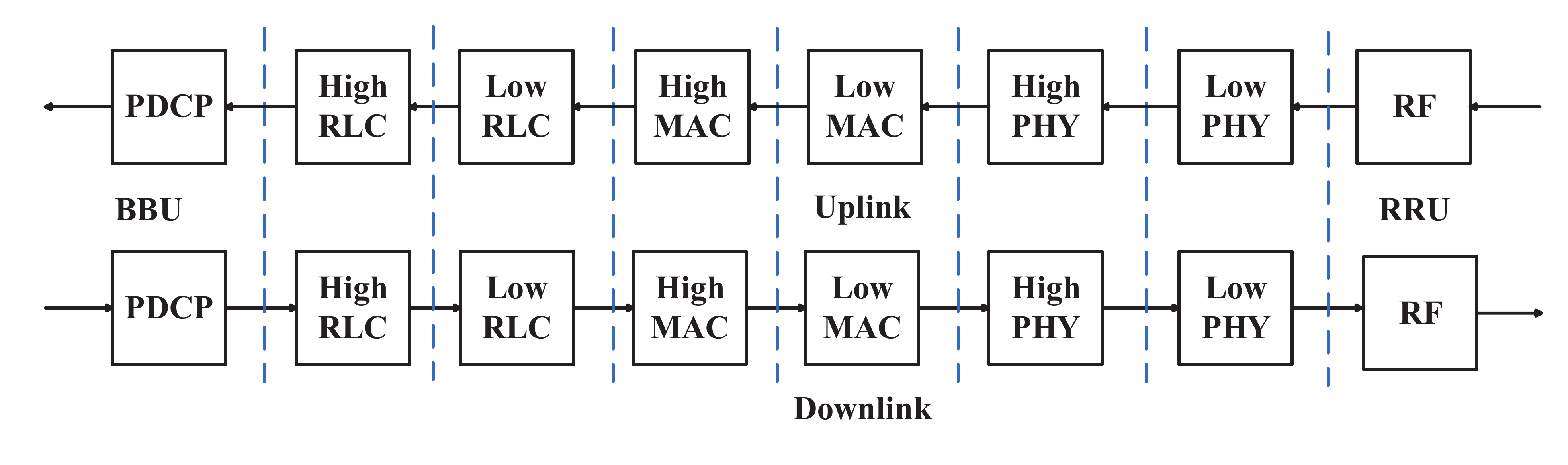}
	\caption{Illustration of baseband functions with multiple candidate split modes.}
	\label{fig:functionalsplit}
\end{figure}

Different from fully centralized C-RAN, by placing some baseband and network functions at RRUs, functional split is a promising technique to reduce the fronthaul rate requirement \cite{SplittingBS, impact}. There are multiple candidate functional split modes corresponding to different split points on the chain of baseband functions, as illustrated in Fig. \ref{fig:functionalsplit}. For each mode, the functions placed at the right side of the corresponding vertical dashed line are placed at the RRU, while the others are centralized at the BBU. The fronthaul rate requirement and processing complexity requirement at the RRUs vary, under different functional split modes. In general, with more baseband functions at the RRUs, the required fronthaul rate is lower, but the processing complexity is higher \cite{LTEmodel, centralize}, which also means more energy consumption at the RRUs. With certain functional split modes, for example, split between the physical layer and the MAC layer, the required fronthaul rate depends on the traffic load, and thus exploiting the fronthaul statistical multiplexing gain can further reduce the fronthaul rate requirement \cite{redesign, multiplexing}. With the development of software defined network (SDN) and  network function virtualization (NFV), baseband functions can be virtualized and implemented on the general purpose computation platforms \cite{SDN, SDRandFS}. As a result, the functions placed at the RRUs and the BBUs can be reconfigured according to the network state \cite{Flex5G, Flex5GSurvey}.

By harvesting renewable energy from the environment, the RRUs are able to consume less or no energy from the power grid \cite{EHNodes, grid, greendelivery}. Another benefit is that the RRUs can be flexibly deployed at the places where the grid does not reach.
However, reliable communication is challenging due to the randomness of renewable energy arrivals and the limitation of batteries, and thus the operation of RRUs should be well managed \cite{EHSurvey}. In terms of power control, different from conventional ``water-filling", the throughput-optimal ``directional water-filling" power control policy is found in a fading energy harvesting channel\cite{waterfilling}, where the ``water", i.e., the energy, can only flow from the past to the future. If the processing energy consumption is considered, the throughput-optimal transmission policy should become bursts, a ``glue pouring" power control policy is proved to be optimal when there is only one energy arrival and no transmission deadline \cite{glue}. The burst transmission is due to the fact that more processing energy is consumed with longer transmission time.
For energy harvesting system with processing cost and multiple energy arrivals, a ``directional backward glue-pouring" algorithm is proposed in \cite{procost}.

{There are some recent works on the flexible functional split mode selection in energy harvesting C-RAN systems. The grid power consumption and system outage rate are jointly studied by optimizing the offline placement of baseband functions, where the small base station is powered by renewable energy and the macro base station is powered by the grid \cite{split1}. Reinforcement learning based  online placement of functional split options is studied in \cite{split2} for efficient utilization of the harvested energy, where the small cell is powered by renewable power with flexible functional split modes.
To improve energy efficiency and throughput, RRU active/sleep mode and functional split mode selection in the energy harvesting C-RAN are determined according to the renewable energy levels and the number of users in the covering area of the RRU \cite{sleepfs}.
However, to the best of our knowledge, the joint optimization of power control and flexible functional split mode selection has not been considered yet.
}

If the functional split mode is fixed in the energy harvesting communication system, the processing power is a constant, and thus ``directional backward glue-pouring" algorithm \cite{procost} can be used to find the optimal power control policy. However, it is expensive and sometimes difficult to deploy fibers between the RRUs and the BBUs, and thus wireless fronthaul may be used as a low cost solution \cite{wirelessBH}. Especially for RRUs powered by energy harvesting, they in general have no wired connection neither for power supply nor for fronthaul. In this case, flexible functional split is necessary, due to not only the fronthauling overhead brought by the wireless fronthaul, but also the unstable renewable energy supply. To this end, there are more than one candidate functional split modes, with different processing costs, and thus existing schemes like ``directional backward glue-pouring" no longer apply. Functional split can tradeoff between the baseband processing complexity of RRUs and the fronthaul data rate requirement. In general, with more baseband functions at the RRU, the baseband processing complexity is higher, but the required fronthaul data rate is lower. Conversely, with more baseband functions at the RRU, the baseband processing power is lower, but the required fronthaul data rate is higher. This calls for new mechanisms that can determine the optimal functional split with the joint consideration of fronthaul properties and renewable energy arrivals.

In this paper, we study the selection of the functional split modes and power control policy for an energy harvesting RRU in C-RAN. We first consider the offline case, where the energy arrivals and the channel fading are non-causally known in advance. The functional split is jointly determined with the corresponding user data transmission duration and transmission power,
and the objective is to maximize the throughput, while satisfying the energy and the average fronthaul rate constraints. For the optimal offline policy, we find that in each interval between successive energy arrivals, at most two modes are selected, the transmission power of the modes are the same for each channel fading block. We further analyze the scenarios with only one instance of energy arrival and two alternative functional split modes, and get the closed-from expression of the transmission power and transmission duration for each split mode, given the average fronthaul rate constraint. Based on the analysis, we propose a heuristic online policy, where the future energy arrivals and the channel fading are unknown in advance. Numerical results show that the heuristic online policy has similar performance with the optimal online policy developed by solving the Markov decision process (MDP) formulation.

The main contributions of this paper are summarized as follows.
\begin{itemize}
\item We jointly optimize the functional split mode selection and power control for an RRU powered with renewable energy, to maximize the throughput under the average fronthaul rate constraint and random energy arrival.
\item For the offline problem where the energy arrivals and the channel fading are non-causally known, the throughput maximization problem is formulated and analyzed. We find the structure of the optimal solution that at most two functional split modes are selected between two successive energy arrivals. The online problem where the channel fading are causally known, is solved by its corresponding MDP formulation through value iteration.
\item To deal with the curse of dimensionality in solving an MDP, the closed-form expression of the transmission power and transmission duration in the special case with one energy arrival is derived, based on which a low-complexity heuristic online policy is proposed, and is shown to have near-optimal performance via extensive simulations.
\end{itemize}

The paper is organized as follows. The system model is described in Section \ref{sec:sysmodel}. The offline optimization problem is formulated and analyzed in Section \ref{sec:maximize}, and the online problem is introduced and solved by an MDP formulation in Section \ref{sec:mdp}. The expression of optimal power control policy with one energy arrival, two functional split modes is derived in Section \ref{sec:single}. A heuristic online policy is proposed in Section \ref{sec:online}. The numerical results are presented in Section \ref{sec:num}. The paper is concluded in Section \ref{sec:conclusion}.

\section{System Model} \label{sec:sysmodel}
{
Consider a two-tier network, where a macro base station (MBS) covers a large area, while an RRU has small coverage areas within the coverage area of
the MBS. The MBS has stable power supply, while the RRU is powered by renewable energy.
The RRU transmits as much data as possible to the users with the harvested energy, while the remaining data is transmitted via the MBS.
We thus aim to maximize the throughput of the RRU to reduce the traffic load of the MBS.
We consider the downlink transmission from a particular RRU to its users, as described in Fig. \ref{fig:mdpmodel}(a).
Assume that the BBU has sufficient data to transmit to users.
}

The system is slotted with normalized slot length.
Assume that the wireless channel of the users is block fading, where the channel gain varies every block but remains constant within one block. Each block has $L$ slots. For each slot, the RRU serves the user with the best channel state, i.e., the user with the largest channel gain.

We assume that the energy arrives over a larger time granularity than that of the wireless channel fading  \cite{Huang14, Gong18}. The energy arrival rate stays constant in $N$ blocks, which is denoted by an epoch. We assume that the energy only arrives at the beginning of each epoch. The approximation is adopted to analyze the effect of different time scales of energy arrival and channel fading on the power control policy.
As illustrated in Fig. \ref{fig:mdpmodel}(b),
$E_m$ units of energy arrives at the beginning of the $m$-th epoch. The arrived energy is stored in a battery with capacity $B_{\text{max}}$ before it is used. Without loss of generality, we assume that $E_m \leq B_{\text{max}}$, i.e., the amount of arrived energy is at most $B_{\text{max}}$. There is no initial energy in the battery, i.e., the battery is empty before the first epoch.
For the $n$-th block of epoch $m$, the maximum channel gain of the users is denoted as $\gamma_{m,n}$,  which corresponds to the modulation and coding scheme (MCS) with the highest transmission rate, that the channel gain can support. Note that $\gamma_{m,n}$ is measured when the reference transmit power is 1W.

{
For the scenario with multiple carriers, we assume that all carriers are used for transmission at the same time and have the same transmission power.
If there are $C$ carriers, the channel gain of carrier $c$ in the $n$-th block of epoch $m$ is denoted by $\gamma_{m,n,c}$. The spectrum efficiency is
\begin{align}
\frac{1}{C}\sum_{c=1}^{C}{\log\left(1+\gamma_{m,n,c}p\right)} & = \frac{1}{C}{\log\left(\prod_{c=1}^{C}(1+\gamma_{m,n,c}p)\right)} \\
& ={\log\left(\sqrt[C]{\prod_{c=1}^{C}(1+\gamma_{m,n,c}p)}\right)}
\approx \log\left(1+\sqrt[C]{\prod_{c=1}^{C}\gamma_{m,n,c}}p\right).
\end{align}
where $p$ is the transmission power.
In the optimal power control policy, blocks with good channel states are used for transmission, and the transmission power should be large enough due to the baseband processing power. The values of $\gamma_{m,n,c}p$ should be large, and thus the approximation is accurate. We now get the approximated channel gain in each block, i.e., $\gamma_{m,n} = \sqrt[C]{\prod_{c=1}^{C}\gamma_{m,n,c}}$, and the problem with multiple carries can be approximated as scenarios with single carrier. We thus explore the scenario with single carrier in the remaining part of this paper.
}

\begin{figure}
	\centering
	\includegraphics[width=0.4\textwidth]{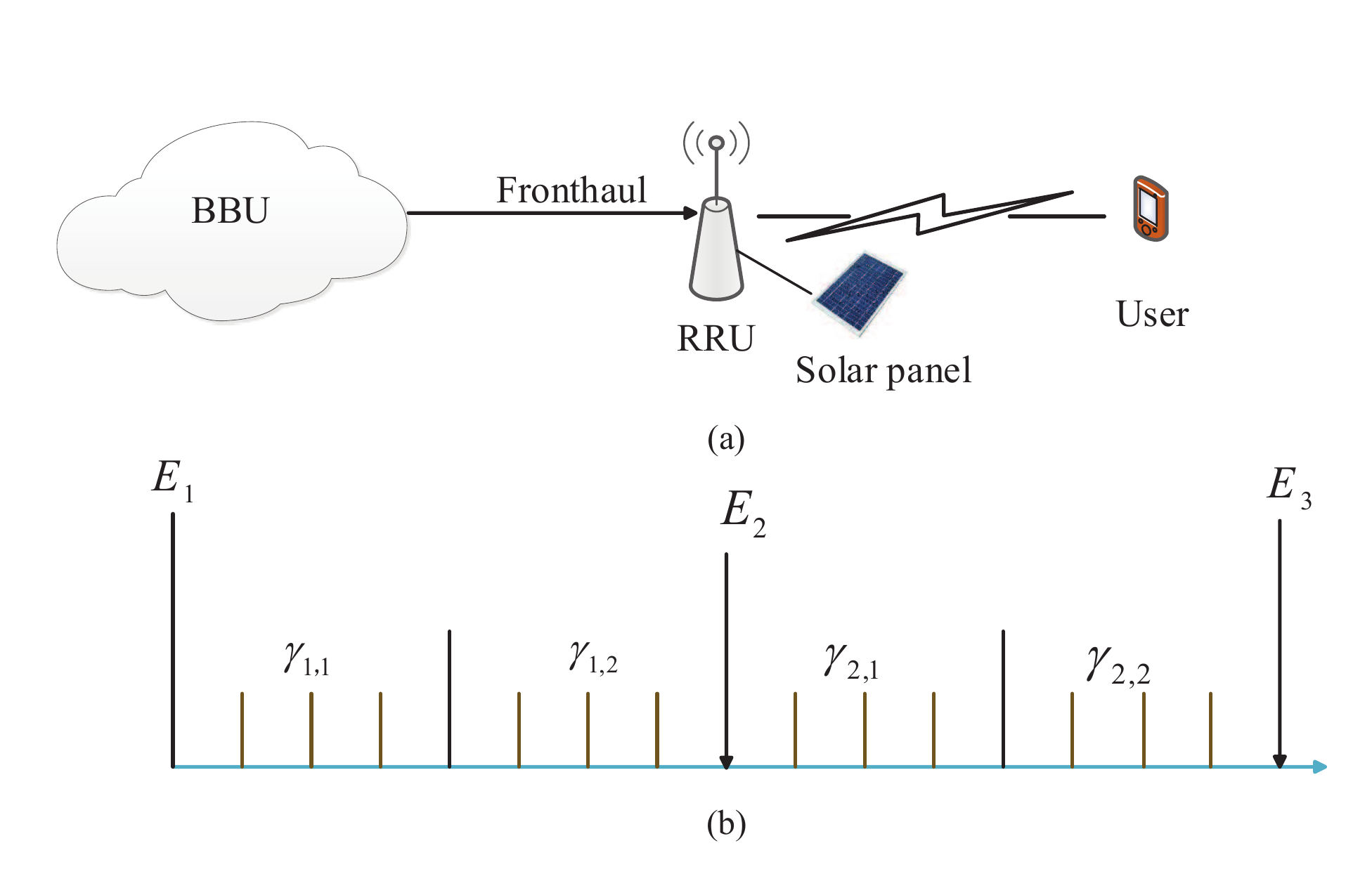}
	\caption{Illustration of C-RAN with renewable energy powered RRU. (a) C-RAN system with renewable energy powered RRU. (b) Energy arrival and channel fading with different time scale.}
	\label{fig:mdpmodel}
\end{figure}

The RRU can be configured with $X$ candidate functional split modes.
In each block, one or more functional split modes can be selected, but at most one functional split mode can be selected at any slot.
In the $n$-th block of epoch $m$, the number of slots that functional split mode $x$ is selected is denoted by $\theta_{m, n, x}$. Note that $\theta_{m,n,x}=0$ means that mode $x$ is not selected in the $n$-th block of epoch $m$.
During one block, the total number of slots used for transmission of the $X$ modes should satisfy
$\sum_{x=1}^{X}\theta_{m,n,x} \leq L$,
where $L$ is number of slots in each block.
The transmission power with mode $x$ in each block should be constant, denoted by $p_{m,n,x}$. The maximum transmission power is $P_{\text{max}}$, i.e., $0 \leq p_{m,n,x} \leq P_{\text{max}}$.
{
The processing power of mode $x$ is denoted by $\varepsilon_k$, and the fronthaul rate requirement is denoted by $R_x$.  The processing power $\varepsilon_k$ and fronthaul rate requirement $R_x$ are related to the MCS \cite{centralize}, and thus related to the transmission power, which makes the problem difficult to analyze. To simplify the problem, we assume that for each functional split mode $x$, the processing power $\varepsilon_k$ and fronthaul rate requirement $R_x$ are constant, which correspond to the MCS with the maximum transmission power $P_{\text{max}}$.
}
Also for the overhead of fronthaul, the average fronthaul rate is constrained to be no more than a given threshold $D$.
As the downlink scenario is considered, the energy consumption of the fronthaul happens at the BBU.
The RRU only consumes energy when it is transmitting data to the users.
In this case, $\theta_{m,n,x}\log (1+ \gamma_{m,n} p_{m,n,x})$ bits of data are transmitted to the users with energy consumption $\theta_{m,n,x}(p_{m,n,x}+\varepsilon_x)$ in the $n$-th block of epoch $m$ with mode $x$.

{
For scenarios with multiple RRUs, if RRUs are self-powered and there is no cooperative transmission, the functional split selection and power control can be done separately at each RRU while treating the signals of other RRUs on the same frequency as noise. As for the scenario with cooperative transmission, we need to further optimize the precoding, and each RRU has its own energy constraints. However, due to the wireless fronthaul implementation and much more complex fronthaul topology, fronthaul sharing and and multiplexing gain should be further considered. Scenarios with cooperative transmission and fronthaul resource management are left as future work.
}

\section{Maximizing the Throughput} \label{sec:maximize}
We consider the offline throughput maximization problem over a finite time of $M$ epochs.
Due to the causality constraints, the energy that has not arrived can not be used, we have
\begin{align}
\sum_{m=1}^{\hat{m}}\sum_{n=1}^{N}\sum_{x=1}^{X}\theta_{m,n,x}(p_{m,n,x}+\varepsilon_x) \leq \sum_{m=1}^{\hat{m}}E_m, \quad \hat{m}=1, 2, ..., M,
\end{align}
note that $\sum_{n=1}^{N}\sum_{x=1}^{X}\theta_{m,n,x}(p_{m,n,x}+\varepsilon_x)$ is the energy consumed in epoch $m$.
There may be energy waste due to the limited battery size when the maximum transmit power is limited, which makes the energy constraints difficult to express. We thus ignore the  maximum transmit power constraint when establishing the offline throughput maximization problem, and then approximate the transmit power that is larger than $P_{\text{max}}$ as $P_{\text{max}}$ in the optimal solution of the problem.
As the energy in the battery at any time can not exceed the battery capacity, at the beginning of epoch $m$, at which time the battery has the most energy in epoch $m$,
there should be
\begin{align}
\sum_{m=1}^{\hat{m}+1}E_m-\sum_{m=1}^{\hat{m}}\sum_{n=1}^{N}\sum_{x=1}^{X}\theta_{m,n,x}(p_{m,n,x}+ \varepsilon_x) \leq B_{\text{max}}, \quad \hat{m}=1, 2, ..., M-1.
\end{align}

Denoted by $\alpha_{m,n,x}=\theta_{m,n,x}p_{m,n,x}$, which is the energy consumed by the radio transmission in the  $n$-th block of epoch $m$ with mode $x$, the optimization problem can be formulated as
\begin{align}
\max_{\theta_{m, n, x}, \alpha_{m,n,x}}  \quad & \sum_{m=1}^{M}\sum_{n=1}^{N}\sum_{x=1}^{X}\theta_{m, n, x} \log(1+\gamma_{m,n} \frac{\alpha_{m,n,x}}{\theta_{m,n,x}}) \label{eq:obj}\\
\text{s.t.} \quad &\frac{1}{MNL}\sum_{m=1}^{M}\sum_{n=1}^{N}\sum_{k=1}^{K}\theta_{m, n,k}R_k \leq D, \label{eq:FHConstraint}\\
&\sum_{m=1}^{\hat{m}}\sum_{n=1}^{N}\sum_{x=1}^{X}(\alpha_{m,n,x}+\varepsilon_x\theta_{m,n,x}) \leq \sum_{m=1}^{\hat{m}}E_m, \quad 1 \leq \hat{m} \leq M\\
&\sum_{m=1}^{\hat{m}+1}E_m-\sum_{m=1}^{\hat{m}}\sum_{n=1}^{N}\sum_{k=1}^{K}(\alpha_{m,n,k}+\varepsilon_k\theta_{m,n,k}) \leq B_{\text{max}}, \quad 1 \leq \hat{m} \leq M-1\\
&\sum_{x=1}^{X}\theta_{m, n,x} \leq L, \quad \forall m,n \label{eq:blockLength}\\
&\alpha_{m,n,x} \geq 0, \theta_{m,n,x} \geq 0, \quad \forall m, n, x
\end{align}
where Eq. (\ref{eq:FHConstraint}) is the  constraint of the average fronthaul rate, and Eq. (\ref{eq:blockLength}) is the constraint of the block length.
Note that the functional split mode is included in the optimization of $\alpha_{m,n,x}$, i.e., $\alpha_{m,n,x}>0$ means that mode $x$ is selected in the $n$-th block of epoch $m$, otherwise mode $x$ is not selected.
Note that we can treat the number of slots $\theta_{m,n,x}$ as a continuous variable in the first place, in which case the complexity of solving the optimization problem can be greatly reduced, and some intuitive results can be given, while at the same time the effect on the throughput is small after approximating $\theta_{m,n,x}$ into an integer  when $L$ is large.
As the optimization objective in Eq. (\ref{eq:obj}) is convex, and the constraints are linear, this is a convex problem. With Lagrangian multiplier method, we are able to get the following structure of the optimal solution.

\begin{prop} \label{prop:1}
{
In the $n$-th block of epoch $m$, during which the channel gain stays constant, the optimal transmission power $p_{m, n, x}$ of the selected modes are the same for any mode $x$ in the optimal solution.
}
\end{prop}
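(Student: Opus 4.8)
The plan is to apply the Karush--Kuhn--Tucker (KKT) conditions, which are here both necessary and sufficient for optimality. The reason is that each objective term $\theta_{m,n,x}\log(1+\gamma_{m,n}\alpha_{m,n,x}/\theta_{m,n,x})$ is the perspective of the concave map $u\mapsto\log(1+\gamma_{m,n}u)$ and is therefore jointly concave in $(\theta_{m,n,x},\alpha_{m,n,x})$, so \eqref{eq:obj} is a concave objective, while constraints \eqref{eq:FHConstraint}--\eqref{eq:blockLength} and the nonnegativity conditions are all linear. First I would attach a multiplier $\lambda\ge 0$ to the fronthaul constraint \eqref{eq:FHConstraint}, multipliers $\mu_{\hat m}\ge 0$ and $\nu_{\hat m}\ge 0$ to the energy-causality and battery constraints, $\eta_{m,n}\ge 0$ to the block-length constraints \eqref{eq:blockLength}, and handle $\alpha_{m,n,x}\ge 0$ and $\theta_{m,n,x}\ge 0$ through complementary slackness.

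The crucial step is to write the stationarity condition with respect to $\alpha_{m,n,x}$ \emph{alone}. The objective contributes $\gamma_{m,n}/(1+\gamma_{m,n}p_{m,n,x})$, where $p_{m,n,x}=\alpha_{m,n,x}/\theta_{m,n,x}$. On the constraint side, the key observation is that all the mode-dependent quantities, namely the processing power $\varepsilon_x$ and the fronthaul rate $R_x$, multiply $\theta_{m,n,x}$ rather than $\alpha_{m,n,x}$; consequently $\alpha_{m,n,x}$ enters the fronthaul and block-length constraints with coefficient zero and enters each energy-causality and battery constraint with index $\hat m\ge m$ with a mode-independent coefficient $\pm 1$. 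The stationarity condition for a transmitting mode therefore reads $\gamma_{m,n}/(1+\gamma_{m,n}p_{m,n,x})=\sum_{\hat m\ge m}(\mu_{\hat m}-\nu_{\hat m})$, whose right-hand side depends only on the epoch index $m$ and not on the mode $x$.

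From this the claim follows immediately: for any two modes $x$ and $x'$ actually used for transmission in block $(m,n)$, the two left-hand sides equal the same epoch-level quantity, and since $\gamma_{m,n}>0$ is common to the whole block, strict monotonicity of $p\mapsto\gamma_{m,n}/(1+\gamma_{m,n}p)$ forces $p_{m,n,x}=p_{m,n,x'}$. I expect the main obstacle to be the careful treatment of the word \emph{selected}: the clean equality above holds at an interior point $\alpha_{m,n,x}>0$, so I would first argue that every selected mode transmits with positive power at the optimum. Indeed, a mode with $\theta_{m,n,x}>0$ but $\alpha_{m,n,x}=0$ adds nothing to the objective while consuming processing energy $\varepsilon_x\theta_{m,n,x}$ and fronthaul and block-length budget, so resetting its $\theta_{m,n,x}$ to zero remains feasible and does not lower the objective; hence without loss of generality every selected mode has $\alpha_{m,n,x}>0$, which by complementary slackness activates the equality rather than an inequality. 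The remaining care is simply to verify that the relevant common multiplier is $\sum_{\hat m\ge m}(\mu_{\hat m}-\nu_{\hat m})$ and that the block-length multiplier $\eta_{m,n}$, which appears only in the $\theta_{m,n,x}$-derivative, does not contaminate the $\alpha_{m,n,x}$-stationarity used above.
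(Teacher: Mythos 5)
Your proposal is correct and takes essentially the same route as the paper's own proof: both form the Lagrangian, take stationarity with respect to $\alpha_{m,n,x}$, use complementary slackness to drop the nonnegativity multiplier for selected modes, observe that the mode-dependent quantities $\varepsilon_x$ and $R_x$ multiply only $\theta_{m,n,x}$, and conclude that $\gamma_{m,n}/(1+\gamma_{m,n}p_{m,n,x})$ equals the mode-independent quantity $\sum_{\hat m=m}^{M}\mu_{\hat m}-\sum_{\hat m=m}^{M-1}\nu_{\hat m}$, forcing equal powers across selected modes. Your additional care (justifying sufficiency of KKT via concavity of the perspective function, and ruling out modes with $\theta_{m,n,x}>0$ but $\alpha_{m,n,x}=0$) only tightens details that the paper handles implicitly by defining ``selected'' as $\alpha_{m,n,x}>0$.
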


\begin{proof}
The Lagrangian with $\phi \geq 0$, $\mu_{\hat{m}} \geq 0$, $\nu_{\hat{m}} \geq 0$, $\tau_{m,n}\geq0$, $\eta_{m,n,x} \geq 0$ and $ \xi_{m,n,x} \geq 0$ can be written as
\begin{align}
\mathcal{L}=&\sum_{m=1}^{M}\sum_{n=1}^{N}\sum_{x=1}^{X}\theta_{m,n,x} \log(1+\gamma_{m,n} \frac{\alpha_{m,n,x}}{\theta_{m,n,x}}) -\phi\left(\frac{1}{MNL}\sum_{m=1}^{M}\sum_{n=1}^{N}\sum_{x=1}^{X}\theta_{m, n,x}R_x-D \right) \nonumber\\
& -\sum_{\hat{m}=1}^{M} \mu_{\hat{m}} \left[\sum_{m=1}^{\hat{m}}\sum_{n=1}^{N}\sum_{x=1}^{X}(\alpha_{m,n,x}+\varepsilon_x\theta_{m,n,x}) - \sum_{m=1}^{\hat{m}}E_m\right]  \nonumber\\
& -\sum_{\hat{m}=1}^{M-1} \nu_{\hat{m}} \left[\sum_{m=1}^{\hat{m}+1}E_m-\sum_{m=1}^{\hat{m}}\sum_{n=1}^{N}\sum_{x=1}^{X}(\alpha_{m,n,x}+\varepsilon_x\theta_{m,n,x}) - B_{\text{max}} \right] \nonumber \\
& -\sum_{m=1}^{M} \sum_{n=1}^{N} \tau_{m,n} \left(\sum_{x=1}^X\theta_{m, n,x} - L\right) \\ 
& + \sum_{m=1}^{M}\sum_{n=1}^{N}\sum_{x=1}^{X} \eta_{m,n,x} \alpha_{m,n,x} + \sum_{m=1}^{M}\sum_{n=1}^{N} \sum_{x=1}^{X}\xi_{m,n,x}\theta_{m,n,x}
\end{align}
Taking derivatives with respect to $\alpha_{m,n,x}$ and $\theta_{m,n,x}$ , there should be
\begin{align}
\frac{\partial{\mathcal{L}}}{\partial{\alpha_{m,n,x}}}=&\frac{\gamma_{m,n}\theta_{m,n,x}}{\theta_{m,n,x}+\gamma_{m,n}\alpha_{m,n,x}}-\sum_{\hat{m}=m}^{M}\mu_{\hat{m}}+\sum_{\hat{m}=m}^{M-1}\nu_{\hat{m}}
+\eta_{m,n,x}, \label{eq:alpha} \\
\frac{\partial{\mathcal{L}}}{\partial{\theta_{m,n,x}}}=&\log(1+\gamma_{m,n} \frac{\alpha_{m,n,x}}{\theta_{m,n,x}})-\frac{\gamma_{m,n}\alpha_{m,n,x}}{\theta_{m,n,x}+\gamma_{m,n}\alpha_{m,n,x}}-\frac{\phi}{MNL} R_x -\sum_{\hat{m}=m}^{M}\mu_{\hat{m}}\varepsilon_x\nonumber \\
&+\sum_{\hat{m}=m}^{M-1}\nu_{\hat{m}}\varepsilon_x -\tau_{m,n} + \xi_{m,n,x} \label{eq:theta}
\end{align}
If mode $x$ is selected in the $n$-th block of epoch $m$, we have $\alpha_{m,n,x}>0$, with the complementary slackness condition
$\eta_{m,n,x} \alpha_{m,n,x}=0$,
we have $\eta_{m,n,x}=0$. According to (\ref{eq:alpha}), let $\frac{\partial{\mathcal{L}}}{\partial{\alpha_{m,n,x}}}=0$,
\begin{equation}
\frac{\gamma_{m,n}\theta_{m,n,x}}{\theta_{m,n,x}+\gamma_{m,n}\alpha_{m,n,x}}=\sum_{\hat{m}=m}^{M}\mu_{\hat{m}}-\sum_{\hat{m}=m}^{M-1}\nu_{\hat{m}},
\end{equation}
i.e., for $\forall n$ and $\forall x$, the transmit power $p_{m,n,x}$ can be expressed as
\begin{equation} \label{eq:sumeq}
p_{m,n,x}=\frac{1}{\sum_{\hat{m}=m}^{M}\mu_{\hat{m}}-\sum_{\hat{m}=m}^{M-1}\nu_{\hat{m}}} - \frac{1}{\gamma_{m,n}}.
\end{equation}
The values of $p_{m,n,x}$ are the same for any selected mode $x$ in the $n$-th block of epoch $m$.
\end{proof}

Proposition \ref{prop:1} reveals that in one block, the transmission power with different functional modes are the same. Further more, we  can find that the sum of the transmit power $p_{m,n,x}$ and the reciprocal of the channel gain $\frac{1}{\gamma_{m,n}}$ are the same for any selected mode $x$  and block $n$ in epoch $m$ according to Eq. (\ref{eq:sumeq}).

\begin{prop} \label{prop:2}
{
In each epoch, i.e.,  the duration between successive energy arrivals, the optimal functional split mode selection policy satisfies that at most two functional split modes are selected.
}
\end{prop}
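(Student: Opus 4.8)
The plan is to combine Proposition \ref{prop:1} with the stationarity condition (\ref{eq:theta}) for the slot variables $\theta_{m,n,x}$, and to show that the modes that can be active anywhere in a given epoch are pinned to a single supporting line in the $(R_x,\varepsilon_x)$ plane. First I would fix an epoch $m$ and abbreviate $\lambda_m:=\sum_{\hat m=m}^{M}\mu_{\hat m}-\sum_{\hat m=m}^{M-1}\nu_{\hat m}$ and $a:=\phi/(MNL)$, so that by (\ref{eq:sumeq}) the optimal power $p_{m,n}:=1/\lambda_m-1/\gamma_{m,n}$ is common to all selected modes in block $n$, and $\lambda_m$ depends only on the epoch, not on $n$ or $x$. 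Substituting this common power into (\ref{eq:theta}) and using $\lambda_m p_{m,n}=\gamma_{m,n}p_{m,n}/(1+\gamma_{m,n}p_{m,n})$, the two channel-and-power terms collapse into a single quantity $G_{m,n}:=\log(1+\gamma_{m,n}p_{m,n})-\gamma_{m,n}p_{m,n}/(1+\gamma_{m,n}p_{m,n})$ that does not depend on $x$. The condition then reads, for a selected mode ($\theta_{m,n,x}>0$, hence $\xi_{m,n,x}=0$), $aR_x+\lambda_m\varepsilon_x=G_{m,n}-\tau_{m,n}$, while for an unselected mode it relaxes to $aR_x+\lambda_m\varepsilon_x\ge G_{m,n}-\tau_{m,n}$.

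The key observation is then immediate: in block $(m,n)$ every selected mode attains the minimum of the linear functional $x\mapsto aR_x+\lambda_m\varepsilon_x$, and that minimum value equals $G_{m,n}-\tau_{m,n}$. Since $a$ is a global multiplier and $\lambda_m$ is constant throughout epoch $m$, the minimizing set $\arg\min_x(aR_x+\lambda_m\varepsilon_x)$ is identical for every block of the epoch; equivalently, all modes used anywhere in the epoch lie on the single supporting line $aR+\lambda_m\varepsilon=\Psi_m^{*}$, where $\Psi_m^{*}:=\min_x(aR_x+\lambda_m\varepsilon_x)$. Generically this minimizer is unique, so a single mode is used in the whole epoch; the case of two modes arises only when the supporting line passes through two mode points, which is exactly the flexibility needed to tune the average fronthaul rate (\ref{eq:FHConstraint}) to its threshold $D$.

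It remains to rule out three or more simultaneously active (hence collinear) modes, for which I would use an exchange argument. All modes on the supporting line deliver the same per-slot throughput $\log(1+\gamma_{m,n}p_{m,n})$ and the same value of $aR_x+\lambda_m\varepsilon_x$, so along that line $\varepsilon_x$ is an affine function of $R_x$; consequently, within any block $n$ the throughput, the energy $\sum_x\theta_{m,n,x}(p_{m,n}+\varepsilon_x)$, and the fronthaul load $\sum_x\theta_{m,n,x}R_x$ depend on the mode split only through the two aggregates $\Theta_n:=\sum_x\theta_{m,n,x}$ and $F_n:=\sum_x\theta_{m,n,x}R_x$. Keeping $\Theta_n$ and $F_n$ fixed in every block therefore leaves the objective, all the cumulative energy and battery constraints, and the single fronthaul average unchanged. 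Since $F_n/\Theta_n$ is a convex combination of the tied $R_x$ values it lies in $[R_{\text{min}},R_{\text{max}}]$, where $R_{\text{min}},R_{\text{max}}$ are the smallest and largest fronthaul rates among the tied modes, so the pair $(\Theta_n,F_n)$ can be reproduced using only these two extreme modes with nonnegative slot counts. Applying this replacement in every block yields an optimal solution that uses at most the two modes $R_{\text{min}},R_{\text{max}}$ throughout the epoch, which proves the claim.

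The main obstacle is upgrading the per-block collinearity to a single epoch-wide supporting line. The equality $aR_x+\lambda_m\varepsilon_x=G_{m,n}-\tau_{m,n}$ alone only forces the modes within one block to be collinear, and a priori different blocks could sit on different parallel lines and thus use disjoint mode sets, which would permit many modes per epoch. Pinning $G_{m,n}-\tau_{m,n}$ to the common minimum $\Psi_m^{*}$ for every used block requires the inequality for unselected modes, and I would derive this through a marginal-introduction argument (the net Lagrangian gain of activating $\delta$ extra slots of an unused mode at power $p_{m,n}$ must be nonpositive, and $\log(1+\gamma_{m,n}p_{m,n})-\lambda_m p_{m,n}=G_{m,n}$) rather than by differentiating (\ref{eq:theta}) at $\theta_{m,n,x}=0$, where the term $\log(1+\gamma_{m,n}\alpha_{m,n,x}/\theta_{m,n,x})$ is of the indeterminate form $0/0$.
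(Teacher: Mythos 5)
Your proof is correct, but it takes a genuinely different route from the paper's. The paper never touches the stationarity condition (\ref{eq:theta}) of the full problem for this proposition: instead it freezes the optimal powers $p_{m,n}^*$ and the per-block slot totals $\theta_{m,n}^{\text{block}}$, observes that the optimal durations must then solve an auxiliary linear program that minimizes the processing energy $\sum_{n,x}\theta_{m,n,x}\epsilon_x$ subject to the epoch fronthaul amount $F_m^*$ and the slot totals, and \emph{relaxes} the per-block slot constraints to a single epoch-level constraint. Because the relaxed LP has only two multipliers $(\rho,\upsilon)$ for the whole epoch, its KKT conditions force every selected mode onto the line $\epsilon_x=\rho R_x+\upsilon$, and three or more points on a line is dismissed as a trivial degenerate case. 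You instead stay inside the KKT system of the original problem: combining (\ref{eq:theta}) with Proposition \ref{prop:1} you show selected modes minimize the epoch-constant functional $aR_x+\lambda_m\varepsilon_x$, and you correctly identify (and repair) the two places where this route is delicate --- the indeterminate form of (\ref{eq:theta}) at $\theta_{m,n,x}=0$, handled by your directional-derivative/marginal-introduction argument (valid, since the block-length constraint is dualized via $\tau_{m,n}$, so the perturbation direction is admissible in the Lagrangian maximization even when the block is full), and the danger that different blocks sit on parallel but distinct lines, handled by pinning each transmitting block's value $G_{m,n}-\tau_{m,n}$ to the common minimum $\Psi_m^*$. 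What each approach buys: the paper's auxiliary-LP device is more elementary and sidesteps the $0/0$ issue entirely, since in the LP the durations enter linearly; your argument is self-contained within the original KKT system, makes the roles of the fronthaul multiplier $\phi$ and the epoch energy multiplier $\lambda_m$ explicit, and --- via the exchange argument reproducing each block's pair $(\Theta_n,F_n)$ with the two extreme modes --- actually resolves the degenerate collinear case rather than ignoring it, so you prove that an optimal policy with at most two modes exists even then, which is a strictly more complete statement than the paper's.
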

\begin{proof}
Denoted by $p_{m,n}^*$ the optimal transmission power in the $n$-th block of epoch $m$, and the corresponding transmission duration with functional split mode $x$ is $\theta_{m,n,x}^*$.
The baseband data amount transmitted via fronthaul in epoch $m$ is defined as $F_m^*$, which can be expressed as
$F_m^*=\sum_{n=1}^{N}\sum_{x=1}^{X}\theta_{m,n,x}^* R_x$.
The number of slots used for transmission in block $n$ is defined as $\theta_{m,n}^{\text{block}}$, i.e.,
$\theta_{m,n}^{\text{block}}=\sum_{x=1}^{X}\theta_{m,n,x}^*$.
Given $\theta_{m,n}^{\text{block}}$ and $p_{m,n}^*$, the throughput and the energy consumed by radio transmissions are fixed. The transmission duration $\theta_{m,n,x}^*$ should be an optimal solution of the following subproblem:
\begin{align}
\min_{\theta_{m,n,x}} \quad &\sum_{n=1}^{N}\sum_{x=1}^{X}\theta_{m,n,x}\epsilon_{x} \nonumber \\
\text{s.t.} \quad &\sum_{n=1}^{N}\sum_{x=1}^{X}\theta_{m,n,x}R_{x} = F_{m}^*, \nonumber \\
&\sum_{x=1}^{X}\theta_{m,n,x}=\theta_{m,n}^{\text{block}}, \quad \forall n \nonumber \\
& \theta_{m,n,x} \geq 0,
\end{align}
where the optimization objective $\sum_{n=1}^{N}\sum_{x=1}^{X}\theta_{m,n,x}\epsilon_{x}$ is the energy consumed by baseband processing,
which means that with transmission duration $\theta_{m,n,x}^*$, the least energy is consumed by baseband processing, i.e., we aim to minimize the energy consumption while guaranteeing the transmission time and average fronthaul rate constraint.
The number of slots used to transmit in epoch $m$ is defined as $\theta_m^{\text{epoch}}$, where
$\theta_m^{\text{epoch}}=\sum_{n=1}^{N}\theta_{m,n}^{\text{block}}$.
 We consider the constraints of the total transmission duration in each epoch, instead of the constraint of the total transmission duration in each block, the subproblem can be relaxed as:
\begin{align}
\min_{\theta_{m,n,x}} \quad &\sum_{n=1}^{N}\sum_{x=1}^{X}\theta_{m,n,x}\epsilon_{x} \nonumber \\
\text{s.t.} \quad &\sum_{n=1}^{N}\sum_{x=1}^{X}\theta_{m,n,x}R_{x} = F_{m}^*, \nonumber \\
&\sum_{n=1}^{N}\sum_{x=1}^{X}\theta_{m,n,x}=\theta_m^{\text{epoch}}, \nonumber \\
& \theta_{m,n,x} \geq 0.
\end{align}

In epoch $m$, the energy consumed by baseband processing and the amount of data transmitted via fronthaul are only related to the total transmission duration of each mode, i.e., $\hat{\theta}_x^{\text{mode}}=\sum_{n=1}^{N}\theta_{m,n,x}$. For any optimal solution of the relaxed subproblem, we can find an equivalent solution of the subproblem, and thus the optimal solution of the subproblem is also the optimal solution of the relaxed subproblem.
The Lagrangian of the relaxed subproblem is
\begin{align}
\mathcal{Z}=&\sum_{n=1}^{N}\sum_{x=1}^{X}\theta_{m,n,x}\epsilon_{x}-\rho \left(\sum_{n=1}^{N}\sum_{x=1}^{X}\theta_{m,n,x}R_{x} - F_{m}^*\right) \nonumber \\
&-\upsilon \left(\sum_{n=1}^{N}\sum_{x=1}^{X}\theta_{m,n,x}-\theta_{m}^*\right)+\sum_{n=1}^{N}\sum_{x=1}^{X}\psi_{n,x} \theta_{m,n,x}
\end{align}
Taking derivatives with respect to $\theta_{m,n,x}$, we have
$\frac{\partial{\mathcal{Z}}}{\partial{\theta_{m,n,x}}}=\epsilon_{x}-\rho R_{x} - \upsilon + \psi_{n,x}$.
If mode $x$ is selected in the $n$-th block of epoch $m$, we have $\theta_{m,n,x}>0$
according to the complementary slackness condition that $\psi_{n,x} \theta_{m,n,x}=0$, we have $\psi_{n,x}=0$.
Let $\frac{\partial{\mathcal{Z}}}{\partial{\theta_{m,n,x}}}=0$, there should be
$\epsilon_{x}-\rho R_{x} - \upsilon=0$.
If  more than two functional split modes are selected, assume that the number of selected functional split modes is $Z$, and the selected modes are $x_z$ for $1 \leq z \leq Z $. The following equations should have solution
\begin{equation} \label{eq:formulation}
\begin{cases}
\epsilon_{x_1}-\rho R_{x_1} & - \upsilon =0 \\
\epsilon_{x_2}-\rho R_{x_2} & - \upsilon =0 \\
&\vdots\\
\epsilon_{x_Z}-\rho R_{x_Z} & - \upsilon =0 \\
\end{cases}
\end{equation}
Note that the formulation (\ref{eq:formulation}) has solution only when $Z \leq 2$, or $R_x$ and $\epsilon_x$ satisfies that
\begin{equation}
\frac{\epsilon_{x_2}-\epsilon_{x_1}}{R_{x_1}-R_{x_2}}=\frac{\epsilon_{x_3}-\epsilon_{x_1}}{R_{x_1}-R_{x_3}},
\end{equation}
for any 3 selected modes, which is a trivial scenario that can be ignored, and thus at most two functional split modes can be selected at each epoch.
\end{proof}

{ The solution obtained with continuous transmission duration is denoted by \emph{`upper bound'}.  We now introduce how to round the `upper bound' into integer transmission duration.
Slots with good channel states are used for transmission.
The number of slots used for transmission with functional split mode $x$ in block $n$ is denoted by $\hat{\theta}_{n,x}^{\text{block}}$, with the corresponding transmission power $p_{m,n,x}$, the energy used for transmission is $E_x^{\text{T}} = \sum_{n=1}^N \hat{\theta}_{n,x}^{\text{block}} p_{m,n,x}$. The energy used for baseband processing is $E_x^{\text{B}} = \sum_{n=1}^N \hat{\theta}_{n,x}^{\text{block}} \epsilon_x$, where $\epsilon_x$ is the baseband processing power.
Number of slots used for transmission of each selected functional split mode is rounded into integer, denoted by $\tilde{\theta}_{n,x}^{\text{block}}$.
Besides the baseband processing energy, i.e., $\tilde{E}_x^{\text{B}} = \sum_{n=1}^N \tilde{\theta}_{n,x}^{\text{block}} \epsilon_x$, the energy used for transmission is $\tilde{E}_x^{\text{T}} = E_x^{\text{B}} + E_x^{\text{T}} - \tilde{E}_x^{\text{B}}$.
The transmission power of each slot after rounding is then calculated according to Proposition \ref{prop:1}, with the constraints of the total transmission energy $\tilde{E}_x^{\text{T}}$.
}

{
According to Proposition \ref{prop:2}, we conclude that at most two functional split modes are selected in one epoch, which means that the functional split mode selection can be determined at the time scale of energy arrival, rather than at the time scale of channel fading.
In this sense, the switching of functional split mode can be done in a large time scale. The switching of functional split mode can be implemented by activating and deactivating functions in RRUs and BBU when RRUs and BBU are constructed by using container technologies, the introduced delay (less than
millisecond \cite{sleepfs}) and energy can be neglected.
}

\section{Optimal Online Policy} \label{sec:mdp}
For the online policy, only the causal (past and present ) energy states and channel states are known at the RRU.
To find the optimal online policy, we formulate the online problem as an MDP.
The channel gain varies at the beginning of each block, and each block has $L$ slots. The beginning of the $(n+1)$-th block is the $(nL+1)$-th slot.
The channel gain is modeled as a Markov chain with $G$ states, and the channel gain of state $g$ is $\Gamma_g$. 
The transition probability from state $g_1$ to state $g_2$ at the beginning of the $(n+1)$-th block is denoted as $p_{g_1,g_2}=\text{Pr}\{\gamma_{nL+1}=\Gamma_{g_2}|\gamma_{nL}=\Gamma_{g_1}\}$.

The energy arrives once an epoch. We assume that the energy arrives at the beginning of each epoch. An epoch has $N$ blocks, i.e., $NL$ slots.
The energy arrival is modeled as a finite state Markov chain with $E_{\text{max}}$ states, and the arrived energy amount with state $e$ is $A_e$.  
The transition probability from state $e_1$ to state $e_2$ at the beginning of the $(m+1)$-th epoch is $q_{e_1,e_2}=\text{Pr}\{E_{mNL+1}=A_{e_2}|E_{(m-1)NL+1}=A_{e_1}\}$.  
The arrived energy is stored in a battery with capacity $B_{\text{max}}$ before it is used.
The transmission power in slot $k$ is denoted as $P_k$. Denoted by $x_k$ the functional split mode selected in slot $k$, the baseband processing power is $\epsilon_{x_k}$.

The energy is consumed only when the RRU transmits data to the users, i.e., when $P_k>0$, the state of energy in the battery $B_k$ is updated as
\begin{subequations}
\begin{numcases} {B_{k+1}=}
\min\{B_k+E_k-\epsilon_{x_k}-P_k, B_{\text{max}}\}, &$P_k>0$ \nonumber \\
\min\{B_k+E_k, B_{\text{max}}\}, &$P_k=0$ \nonumber
\end{numcases}
\end{subequations}
To simplify the expression, we introduce a new variable, defined as
\begin{subequations}
\begin{numcases} {\delta_k=}
1, &$P_k>0$\\
0, &$P_k=0$
\end{numcases}
\end{subequations}
then the battery state is updated as
\begin{align}
B_{k+1}=\min\{B_k+E_k-\delta_k(\epsilon_{x_k}+P_k), B_{\text{max}}\}.
\end{align}

The system state is
\begin{align}
s_k=(B_k, E_k, Y_k, \gamma_k, n_k, l_k),
\end{align}
where $B_k$ is the state of energy available in the battery, $E_k$ is the energy arrived in stage $k$, $Y_k$ records the energy arrival rate of the current epoch, $\gamma_k$ is the channel gain, $n_k$ indicates how many blocks the current epoch has lasted, $l_k$ indicates how many slots the current block has lasted.
The state transition probability is
\begin{align}
\text{Pr}\{s_{k+1}|s_k,P_k,x_k\}= & \text{Pr}\{B_{k+1}|B_k, P_k, E_k,x_k\}\text{Pr}\{E_{k+1}|Y_k,E_k,n_k,l_k\} \times \nonumber\\
&\text{Pr}\{Y_{k+1}|Y_k,E_k,n_k,l_k\}\text{Pr}\{n_{k+1}|n_k,l_k\}\text{Pr}\{\gamma_{k+1}|\gamma_k,l_k\}\text{Pr}\{l_{k+1}|l_k\}
\end{align}
{
The value of $n_k$ varies at the beginning of each block, and the state transition is described in Fig. \ref{fig:trans}(a). } The transition probability of $n_k$ is expressed as
 \begin{subequations}
\begin{numcases} {\text{Pr}\{n_{k+1}|n_k,l_k\}=}
1, &$\text{if} \quad n_{k+1}=\text{mod}(n_k, N)+1, l_k = L$ \nonumber\\
1, &$\text{if} \quad n_{k+1}=n_k, l_k < L$ \nonumber\\
0, &$\text{else}$ \nonumber
\end{numcases}
\end{subequations}
where $\text{mod}(n_k, N)$ is modulus operation which returns the remainder after division of $n_k$ by $N$.
The value of $l_k$ varies at the beginning of each slot. The transition probability of $l_k$ is expressed as
 \begin{subequations}
\begin{numcases} {\text{Pr}\{l_{k+1}|l_k\}=}
1, &$\text{if} \quad l_{k+1}=\text{mod}(l_k, L)+1$ \nonumber\\
0, &$\text{else}$ \nonumber
\end{numcases}
\end{subequations}

\begin{figure}
	\centering
	\includegraphics[width=0.75\textwidth]{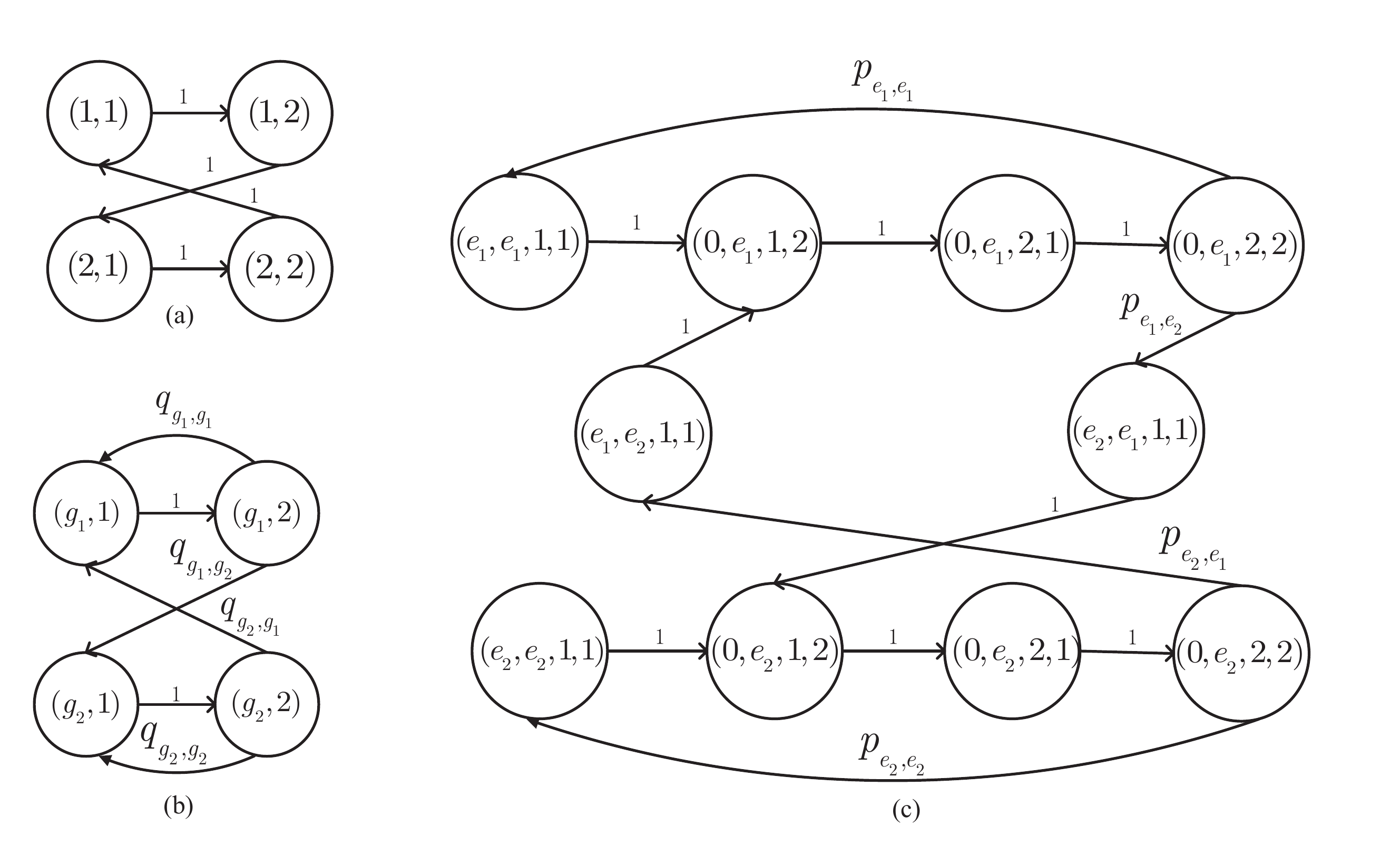}

	\caption{{Illustration of state transitions: (a) state transition of ($n_k$, $l_k$) when $N$=2 and $L$=2;
(b) state transition of ($\gamma_k$, $l_k$) when G=2 and $L$=2;
(c) state transition of ($E_k$, $Y_k$, $n_k$, $l_k$) when $E_{\text{max}}$=2, $N$=2 and $L$=2.}} \label{fig:trans}
\end{figure}

{
The state transition of channel state is described in Fig. \ref{fig:trans}(b), and
the transmission probability of $\gamma_k$ is
}
\begin{subequations}
\begin{numcases} {\text{Pr}\{\gamma_{k+1}|\gamma_k,l_k\}=}
q_{g_1,g_2}, &$\text{if} \quad \gamma_{k+1}=\Gamma_{g_2}, \gamma_k=\Gamma_{g_1}, l_k=L$ \nonumber \\
1, &$\text{if} \quad \gamma_{k+1}=\gamma_k, l_k<L$ \nonumber \\
0, &$\text{else}$ \nonumber
\end{numcases}
\end{subequations}

The transition probability of battery state is
\begin{subequations}
\begin{numcases}{\text{Pr}\{B_{k+1}|B_k, P_k, E_k,x_k\}=}
1, &$\text{if} \quad B_{k+1}=\min\{B_k+E_k-\delta_k(P_k+\epsilon_{x_k}), B_{\text{max}}\}$ \nonumber \\
0, &$\text{else}$ \nonumber
\end{numcases}
\end{subequations}
As energy arrives every $N$ blocks, the energy only arrives at the beginning of each epoch.
{
The state transition of energy arrival state is described in Fig. \ref{fig:trans}(c), and
the transmission probability of $E_k$ is expresses as
}
\begin{subequations}
\begin{numcases} {\text{Pr}\{E_{k+1}|Y_k,n_k,l_k\}=}
1, &$\text{if} \quad E_{k+1}=0, n_k<N$ \nonumber \\
1, &$\text{if} \quad E_{k+1}=0, l_k<L$ \nonumber \\
p_{e_1,e_2}, &$\text{if} \quad E_{k+1}=A_{e_2}, Y_k=A_{e_1}, n_k=N, l_k=L$ \nonumber \\
0, &$\text{else}$ \nonumber
\end{numcases}
\end{subequations}
The value of $Y_k$ only changes after a new instance of energy arrival. The transmission probability of $Y_k$ is expressed as
 \begin{subequations}
\begin{numcases} {\text{Pr}\{Y_{k+1}|Y_k,E_k,n_k,l_k\}=}
1, &$\text{if} \quad n_k>1, Y_{k+1}=Y_k$ \nonumber\\
1, &$\text{if} \quad l_k>1, Y_{k+1}=Y_k$ \nonumber\\
1, &$\text{else if} \quad n_k=1, l_k=1, Y_{k+1}=E_k$ \nonumber \\
0, &$\text{else}$ \nonumber
\end{numcases}
\end{subequations}

Due to the constraints of the energy in the battery and the maximum transmit power, the transmit power should be constrained as
\begin{align}
0 \leq P_k \leq \min\{ B_k-\epsilon_{x_k}, P_{\text{max}}\},
\end{align}
where $P_{\text{max}}$ is the maximum allowed transmission power.
According to Shannon's equation, denoted by
\begin{align}
r(P_k,\gamma_k)=\log(1+\gamma_k P_k).
\end{align}
{
The objective function is set as
\begin{align}
\lim_{K \to \infty}  \max_{P_k, x_k} \frac{1}{K}E\left[\sum_{k=1}^{K}r(P_k,\gamma_k)-\eta\sum_{k=1}^{K}\delta_kR_{x_k}\right],
\end{align}
where $r(P_k,\gamma_k)$ is the transmission rate of stage $k$ given the transmission power $P_k$ and the channel gain $\gamma_k$, which corresponds to the throughput in stage $k$, the expectation is taken over the channel gain and the energy arrival rate; $\delta_kR_{x_k}$ is the amount of baseband signals transmitted via fronthaul in slot $k$, i.e., the fronthaul overhead, which corresponds to the average fronthaul rate.
The optimization variable is the transmission power and the functional split mode selection, and $\eta$ is a weighting factor. We can tradeoff between the throughput and the fronthaul overhead by adjusting $\eta$. With large $\eta$, we have stringent constraint on the average fronthaul rate. To satisfy a given constraint of average fronthaul rate, we can iterate the weighting factor $\eta$ with algorithms such as the gradient descent method \cite{iter}.
}

The average throughput maximization problem is formulated as an MDP, and the value iteration algorithm can be used to find the optimal policy \cite{DP}. Every slot is treated as a stage. Denoted by $a_k=\{P_k, x_k\}$ the action taken in stage $k$. The reward function in stage $k$ is denoted by
\begin{equation}
g(s_k,a_k)=\log(1+\gamma_k P_k)-\eta \delta_kR_{x_k},
\end{equation}
The objective is to minimize the average per-stage reward of the infinite horizon problem, which is denoted by
\begin{equation}
J^*(s_0)=\lim_{K \to \infty}\max_{\pi} \frac{1}{K}E\left[\sum_{k=0}^{K-1}g(s_k,a_k)\right], \label{eq:mdp}
\end{equation}
where $s_0$ is the initial state, $\pi=\{a_0, a_1, ..., a_{K-1}\}$ is the possible policy.
Problem (\ref{eq:mdp}) can be solved with value iteration algorithm. Denoted by $\lambda$ the average per-stage reward, $h(i)$ the relative reward when starting at state $i$, the Bellman equation is expressed as
\begin{equation}
\lambda+h(i)=\max_{a}\{ g(i, a)+ \sum_{j \in \mathcal{S}}\text{Pr}\{j|i,a\}h(j)\},
\end{equation}
where $\mathcal{S}$ is the set of all possible states.
Initialize $h^{(0)}(i)=0$. Given any state $s$, for the $(b+1)$-th iteration, we have
\begin{align}
h^{(b+1)}(i)= \max_{a}\{ g(i, a)+ \sum_{j \in \mathcal{S}}\text{Pr}\{j|i,a\}h^{(b)}(j)\}-\max_{a}\{ g(s, a)+ \sum_{j \in \mathcal{S}}\text{Pr}\{j|s,a\}h^{(b)}(j)\},
\end{align}
note that $\max_{a}\{ g(s, a)+ \sum_{j \in \mathcal{S}}\text{Pr}\{j|s,a\}h^{(b)}(j)\}$ converges to $\lambda$.
A more general iteration formulation is
\begin{align}
h^{(b+1)}(i)= &(1-\tau) h^{(b)}(i)+\max_{a}\{ g(i, a)+ \tau \sum_{j \in \mathcal{S}}\text{Pr}\{j|i,a\}h^{(b)}(j)\} \nonumber\\
&-\max_{a}\{ g(s, a)+ \tau \sum_{j \in \mathcal{S}}\text{Pr}\{j|s,a\}h^{(b)}(j)\},
\end{align}
where $0 < \tau < 1$. Denote the gap between  $h^{(b+1)}(i)$ and $h^{(b)}(i)$ as $d^{(b)}(i)$, i.e.,
\begin{align}
d^{(b)}(i)=h^{(b+1)}(i)-h^{(b)}(i).
\end{align}
The iteration is considered as convergence when
\begin{align}
\max_{i}d^{(b)}(i)-\min_{i}d^{(b)}(i)<\omega,
\end{align}
where $\omega$ is a threshold which determines the convergence speed. The detailed value iteration algorithm is described in Algorithm \ref{alg:mdp}.
\begin{algorithm}
\caption{Value Iteration Algorithm}
\label{alg:mdp}
\begin{algorithmic}
\STATE {Initialize $b=0$, $h^{(0)}(i)=0$ for $\forall i \in \mathcal{S}$, $\lambda^{(0)}=0$, select a fixed state $s_0$ }
\REPEAT
\STATE{1. Update the average per-stage reward $\lambda$}
\STATE{\begin{equation}
\lambda^{(b+1)}=\max_{a}\{ g(s_0, a) +  \tau \sum_{j \in \mathcal{S}}\text{Pr}\{j|s_0,a\}h^{(b)}(j)\}\}. \nonumber
\end{equation}
}
\STATE{2. Update $h$}
\STATE{
\begin{align}
h^{(b+1)}(i)= (1-\tau) h^{(b)}(i) + \max_{a}\{ g(i, a)+ \tau \sum_{j \in \mathcal{S}}\text{Pr}\{j|i,a\}h^{(b)}(j)\}-\lambda^{(b+1)} \nonumber
\end{align}
}
\STATE{3. Update $b=b+1$}
\UNTIL{$\max_{i}d^{(b)}(i)-\min_{i}d^{(b)}(i)<\omega$}
\end{algorithmic}
\end{algorithm}

For the optimal online problem, the state number of the MDP model is $(B_{\text{max}}+1)\times E_{\text{max}} \times E_{\text{max}} \times G \times N \times L$,  and the number of actions is $(P_{\text{max}}+1)\times X$. The state space can be very large if some of the elements is of large size. The value iteration algorithm may encounter curse of dimensionality. In this case, lower-complexity algorithm is in need. In the next section, we will first analyze the power control policy with one instance of energy arrival, based on which a heuristic online algorithm is proposed.

\section{Single Energy Arrival, Constant Channel Gain} \label{sec:single}
According to Proposition \ref{prop:2}, at most two functional split modes are selected in each epoch in the optimal offline problem.
To gain some insights, we will give some intuitive results when there is only one instance of energy arrival, and the channel gain is constant, i.e., $M=1$, $X=2$. Note that if the channel gain is averaged over an epoch, one epoch can be approximated to have only one block, where the approximated channel gain is the average channel gain over the epoch.
For brevity, we will use $\theta_1$, $\theta_2$, $p_1$, $p_2$ instead of $\theta_{1,1,1}$, $\theta_{1,1,2}$, $p_{1,1,1}$ and $p_{1,1,2}$ in this section, i.e., $\theta_1$ and $\theta_2$ are the corresponding transmission durations with the 2 functional split modes, $p_1$ and $p_2$ are the transmission power, the amount of available energy in this epoch is denoted by $E$, the epoch length is denoted by $L$. If there are more than two candidate functional split modes, i.e., $X>2$, we can first calculate the throughput when any two of the functional split modes are selected (there are totally $\frac{X(X-1)}{2}$ possible combinations), and obtain the optimal power control policy by comparing the throughput of all the possible scenarios.

If only one mode is selected, denoted by mode $x$, the optimal power control policy can be obtained with ``glue pouring" \cite{glue}. Given the processing power $\varepsilon_j$ and channel gain $\gamma$, and without maximum transmission duration constraint, the throughput maximization problem can be simplified to
\begin{equation}
\max_{p_x} \frac{E}{p_x+\varepsilon_x}\log(1+\gamma p_x),
\end{equation}
where $p_x$ is the transmission power, and denote $v_x^*$ as the optimal transmission power obtained by solving the optimization problem.
The optimal transmission power $v_x^*$ satisfies:
\begin{align}
(1+\gamma v_x^*)\log(1+\gamma v_x^*)-\gamma v_x^*=\gamma \varepsilon_x. \label{eq:glue}
\end{align}
Note that the expression on the left side of the equality is an increasing function of $v_x^*$, the equation has an unique solution, and $v_x^*$ increases with $\varepsilon_x$. Due to the constraints of epoch length and average fronthaul rate, the transmission duration is limited. Denoted by $\theta_x^{\text{max}}=\min\{\frac{DL}{R_x}, L\}$, which is the maximum transmission duration when only mode $x$ is selected. When $E<\theta_x^{\text{max}}(v_x^*+\varepsilon_x)$, the optimal power control policy is
\begin{equation}
p_x=v_x^*,\quad  \theta_x=\frac{E}{v_x^*+\varepsilon_x}.
\end{equation}
When $E \geq \theta_x^{\text{max}}(v_x^*+\varepsilon_x)$, the optimal power control policy is
\begin{equation}
p_x=\frac{E}{\theta_x^{\text{max}}}-\varepsilon_x, \quad \theta_x=\theta_x^{\text{max}}.
\end{equation}

Due to the average fronthaul rate constraint $D$, the power control policy is affected. We will derive the optimal power control policy under different values of $D$ in the following part of this section. We assume that the two modes  are mode 1 and mode 2, where mode 1 has less baseband functions at the RRU, and we thus have $R_1 > R_2$, $\epsilon_1  < \epsilon_2$.

\subsection{$D \geq R_1$}

\begin{figure}
	\centering
	\includegraphics[width=0.45\textwidth]{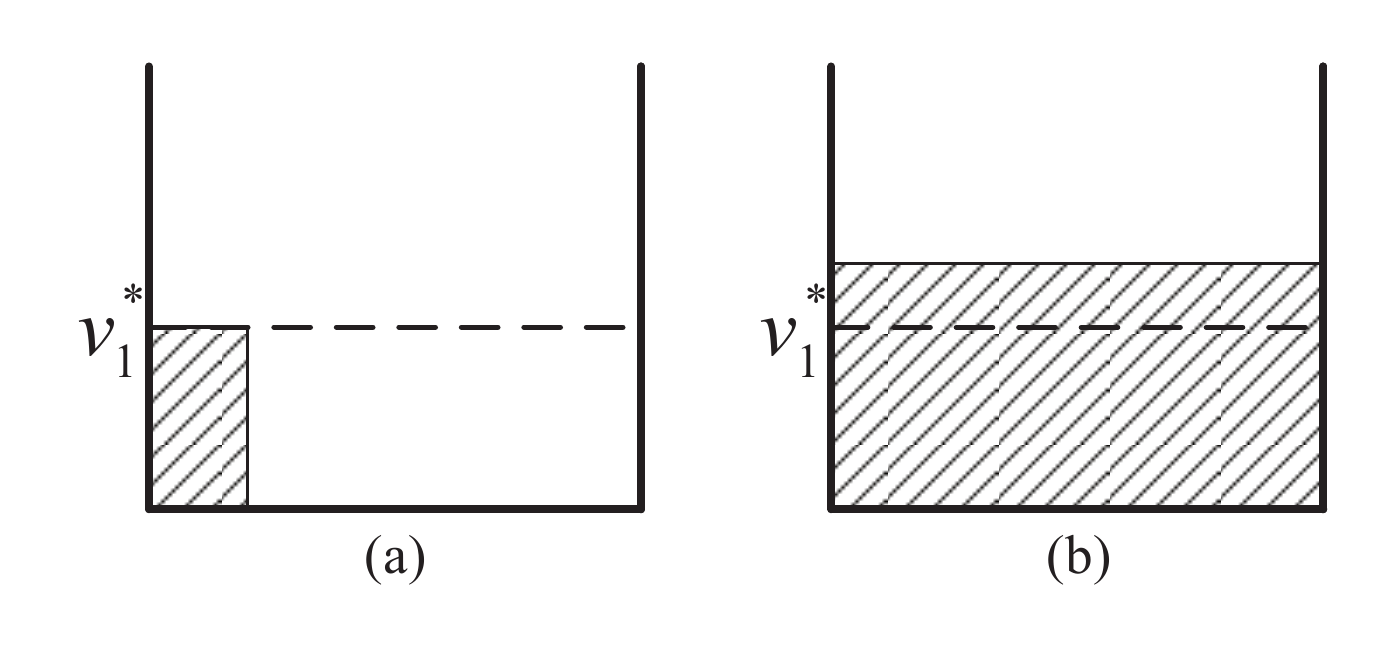}
	\caption{The optimal power control policy when $D \geq R_1$, where $\theta_1$ and $p_1$ are represented by the width and height of the black shadowed block with up diagonal respectively: (a) $E<(v_1^*+\varepsilon_1)L$; (b) $E\geq(v_1^*+\varepsilon_1)L$. }
	\label{fig:case1}
\end{figure}

When $D \geq R_1$, the average fronthaul rate constraint can always be satisfied, and thus only mode $1$, which has smaller processing power, is selected.
When $E<(v_1^*+\varepsilon_1)L$, the optimal power control policy is
\begin{align}
\theta_1=\frac{E}{v_1^*+\varepsilon_1}, \quad p_1=v_1^*, \quad \theta_2=0,  \quad p_2=0,
\end{align}
as described in Fig. \ref{fig:case1}(a).
When $E\geq(v_1^*+\varepsilon_1)L$, the optimal power control policy is
\begin{align}
\theta_1=L,  \quad p_1=\frac{E}{L}-\varepsilon_1,  \quad \theta_2=0,  \quad p_2=0,
\end{align}
as described in Fig. \ref{fig:case1}(b).

\subsection{$R_2 < D < R_1$} \label{sec:sub2}

\begin{figure}
	\centering
	\includegraphics[width=0.45\textwidth]{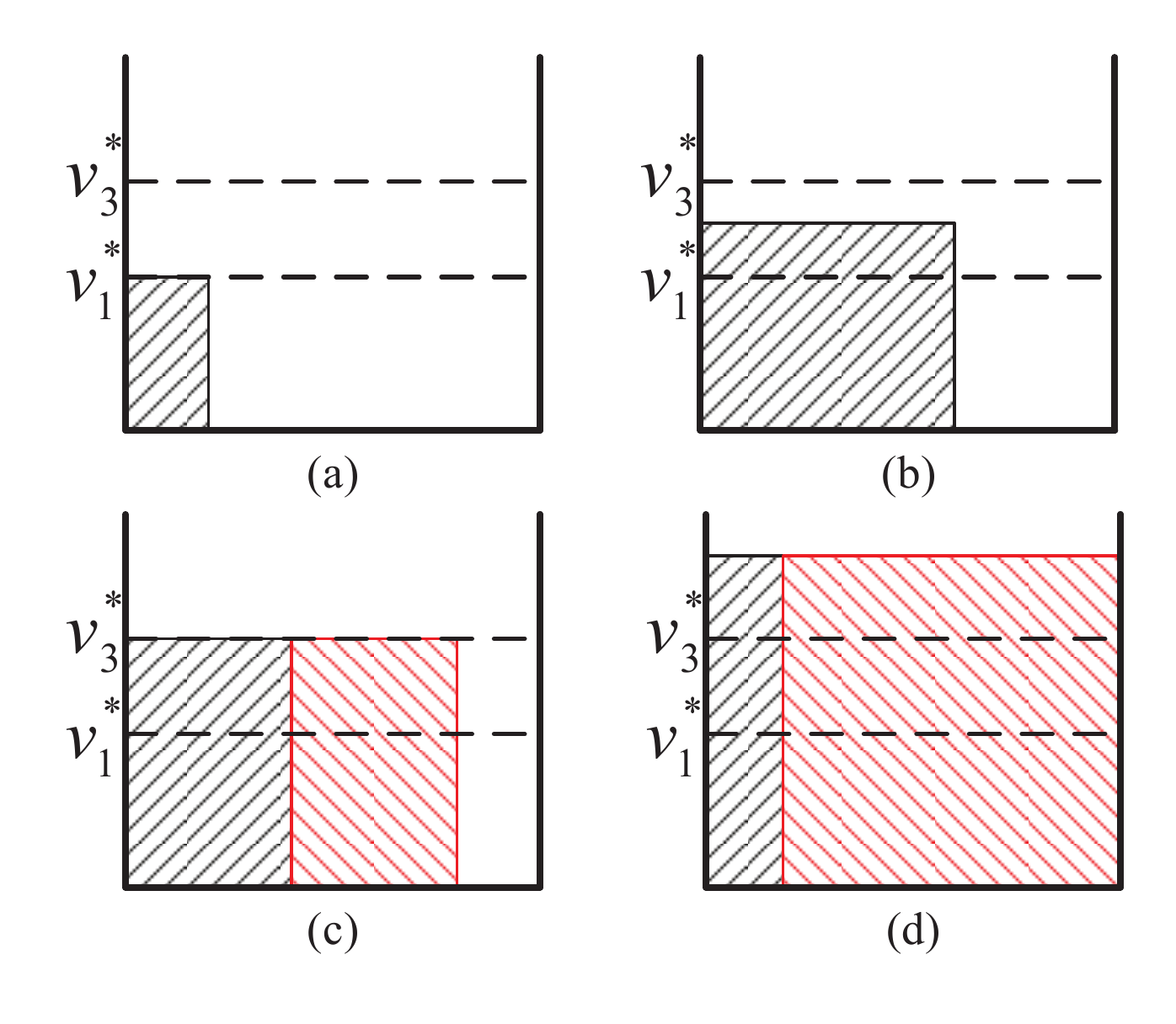}
	\caption{The optimal power control policy when $R_2 < D < R_1$, $\theta_1$ and $p_1$ are represented by the width and height of the black shadowed block with up diagonal, respectively, $\theta_2$ and $p_2$ are represented by the width and height of the red shadowed  block with down diagonal, respectively: (a) $E \leq \frac{DL(v_1^*+\varepsilon_1)}{R_1}$; (b) $\frac{DL(v_1^*+\varepsilon_1)}{R_1} < E \leq \frac{DL(v_3^*+\varepsilon_1)}{R_1}$; (c) $\frac{DL(v_3^*+\varepsilon_1)}{R_1} < E \leq Lv_3^*+\frac{DL(\varepsilon_1-\varepsilon_2)+(R_1\varepsilon_2-R_2\varepsilon_1)L}{R_1-R_2}$; (d) $E>Lv_3^*+\frac{DL(\varepsilon_1-\varepsilon_2)+(R_1\varepsilon_2-R_2\varepsilon_1)L}{R_1-R_2}$.}
	\label{fig:case2}
\end{figure}

When $E \leq \frac{DL(v_1^*+\varepsilon_1)}{R_1}$, if only functional split mode 1 is selected with transmission power $v_1^*$, the transmission time is $\frac{E}{v_1^*+\varepsilon_1}$, where the average fronthaul rate constraint can be satisfied.
Thus the optimal power control policy is
\begin{align}
\theta_1=\frac{E}{v_1^*+\varepsilon_1},  \quad p_1=v_1^*,  \quad \theta_2=0,  \quad p_2=0,
\end{align}
as described in Fig. \ref{fig:case2}(a).

If $\theta_1R_1+\theta_2R_2<DL$, i.e., the amount of data transmitted via fronthaul is less than the allowed amount $DL$, functional split mode 2, which has larger processing power, should not be selected.
When $D<R_1$, and $E \geq \frac{DL(v_1^*+\varepsilon_1)}{R_1}$, if only functional split mode 1 is selected, $\theta_1=\frac{DL}{R_1}$, we have $\theta_1R_1+\theta_2R_2=DL$. We can draw the conclusion that when $E \geq \frac{DL(v_1^*+\varepsilon_1)}{R_1}$, there should be
$\theta_1R_1+\theta_2R_2=DL$.
According to Proposition \ref{prop:1}, the transmission power of the two modes are the same, denoted by $p$, and thus we have
$\theta_1(p+\varepsilon_1)+\theta_2(p+\varepsilon_2)=E$,
the transmission duration can be expressed as
\begin{align}
\theta_1=\frac{(p+\varepsilon_2)DL-R_2E}{R_1(p+\varepsilon_2)-R_2(p+\varepsilon_1)}, \quad \theta_2=\frac{R_1E-(p+\varepsilon_1)DL}{R_1(p+\varepsilon_2)-R_2(p+\varepsilon_1)}.
\end{align}
The throughput is
\begin{align}
H=\frac{(R_1-R_2)E+(\varepsilon_2-\varepsilon_1)DL}{R_1(p+\varepsilon_2)-R_2(p+\varepsilon_1)}\log(1+\gamma p).
\end{align}
Taking the derivative of $H$ with respect to $p$, we have
\begin{align}
\frac{\partial H}{\partial p}=&\frac{(R_1-R_2)\left[(R_1-R_2)E+(\varepsilon_2-\varepsilon_1)DL\right]}{\left[(R_1-R_2)p+R_1\varepsilon_2-R_2\varepsilon_1 \right]^2} \nonumber \\
& \times \left[\frac{\gamma(p+\varepsilon_2+\frac{R_2(\varepsilon_2-\varepsilon_1)}{R_1-R_2})}{1+\gamma p}-\log(1+\gamma p)\right]
\end{align}
Let $\frac{\partial H}{\partial p}=0$, we have
\begin{align} \label{eq:v3}
\frac{\gamma(v_3^*+\varepsilon_2+\frac{R_2(\varepsilon_2-\varepsilon_1)}{R_1-R_2})}{1+\gamma v_3^*}-\log(1+\gamma v_3^*)=0,
\end{align}
this equation is equivalent to (\ref{eq:glue}), which obtains the optimal transmission power in glue pouring, denote by  $v_3^*$. Since $\frac{R_2(\varepsilon_2-\varepsilon_1)}{R_1-R_2}>0$ and $\varepsilon_2>\varepsilon_1$, we have $v_3^*>v_1^*$.

When $p<v_3^*$, we have $\frac{\partial H}{\partial p}>0$, the throughput increases with $p$. The transmission power $p$ should be as large as possible, while satisfying that $\theta_1 \geq 0$ and $\theta_2 \geq 0$.
When $\frac{DL(v_1^*+\varepsilon_1)}{R_1} < E \leq \frac{DL(v_3^*+\varepsilon_1)}{R_1}$, the maximum transmission power $p=\frac{ER_1}{DL}-\varepsilon_1$ is achieved when $\theta_1=\frac{DL}{R_1}$ and $\theta_2=0$,
i.e., the optimal power control policy is
\begin{align}
\theta_1=\frac{DL}{R_1}, \quad p_1=\frac{ER_1}{DL}-\varepsilon_1, \quad \theta_2=0, \quad p_2=0,
\end{align}
i.e., only functional split mode 1 is selected, the transmission power increases with $E$, while the transmission duration remains unchanged, as described in Fig. \ref{fig:case2}(b).

When $p>v_3^*$, we have $\frac{\partial H}{\partial p}<0$, the throughput decreases with $p$.
If $\frac{DL(v_3^*+\varepsilon_1)}{R_1} < E \leq Lv_3^*+\frac{DL(\varepsilon_1-\varepsilon_2)+(R_1\varepsilon_2-R_2\varepsilon_1)L}{R_1-R_2}$, the transmission power can be $v_3^*$, and the transmission duration can be obtained by solving the  following equations:
\begin{align}
\theta_1R_1+\theta_2R_2=DL,  \quad \theta_1(v_3^*+\varepsilon_1)+\theta_2(v_3^*+\varepsilon_2)=E.
\end{align}
The optimal power control policy is
\begin{align}
\theta_1=\frac{(v_3^*+\varepsilon_2)DL-R_2E}{R_1(v_3^*+\varepsilon_2)-R_2(v_3^*+\varepsilon_1)}, \quad
\theta_2=\frac{R_1E-(v_3^*+\varepsilon_1)DL}{R_1(v_3^*+\varepsilon_2)-R_2(v_3^*+\varepsilon_1)}, \quad p=v_3^*, \label{eq:case23}
\end{align}
as described in Fig. \ref{fig:case2}(c). With the increasing of $E$, the transmission power remains unchanged, the transmission duration of functional split mode 1 decreases while  the transmission duration of functional split mode 2 increases.
Note that when $E=Lv_3^*+\frac{DL(\varepsilon_1-\varepsilon_2)+(R_1\varepsilon_2-R_2\varepsilon_1)L}{R_1-R_2}$, the total transmission duration is equal to the epoch length, i.e., $\theta_1+\theta_2=L$.

When $E>Lv_3^*+\frac{DL(\varepsilon_1-\varepsilon_2)+(R_1\varepsilon_2-R_2\varepsilon_1)L}{R_1-R_2}$, due to the epoch length constraint, we have $p>v_3^*$, and the transmission durations of the two functional split modes should satisfy
\begin{align}
\theta_1+\theta_2=L, \quad \theta_1R_1+\theta_2R_2=DL,
\end{align}
i.e., $\theta_1=\frac{DL-R_2L}{R_1-R_2}$, $\theta_2=\frac{R_1L-DL}{R_1-R_2}$. As there is no energy waste, we have
$\theta_1(p+\varepsilon_1)+\theta_2(p+\varepsilon_2)=E$,
i.e., the optimal power control policy is
\begin{align}
\theta_1=\frac{DL-R_2L}{R_1-R_2}, \quad \theta_2=\frac{R_1L-DL}{R_1-R_2}, \quad p=\frac{E}{L}-\frac{D(\varepsilon_1-\varepsilon_2)}{R_1-R_2}-\frac{R_1\varepsilon_2-R_2\varepsilon_1}{R_1-R_2},
\end{align}
as described in Fig. \ref{fig:case2}(d). With the increasing of $E$, the transmission durations of both functional split modes stay unchanged, while the transmission power increases.

\subsection{$D \leq R_2$}
When $D \leq R_2$, the derivation of the optimal transmission power control policy is similar to the analysis in Section \ref{sec:sub2}.
\begin{figure}
	\centering
	\includegraphics[width=0.45\textwidth]{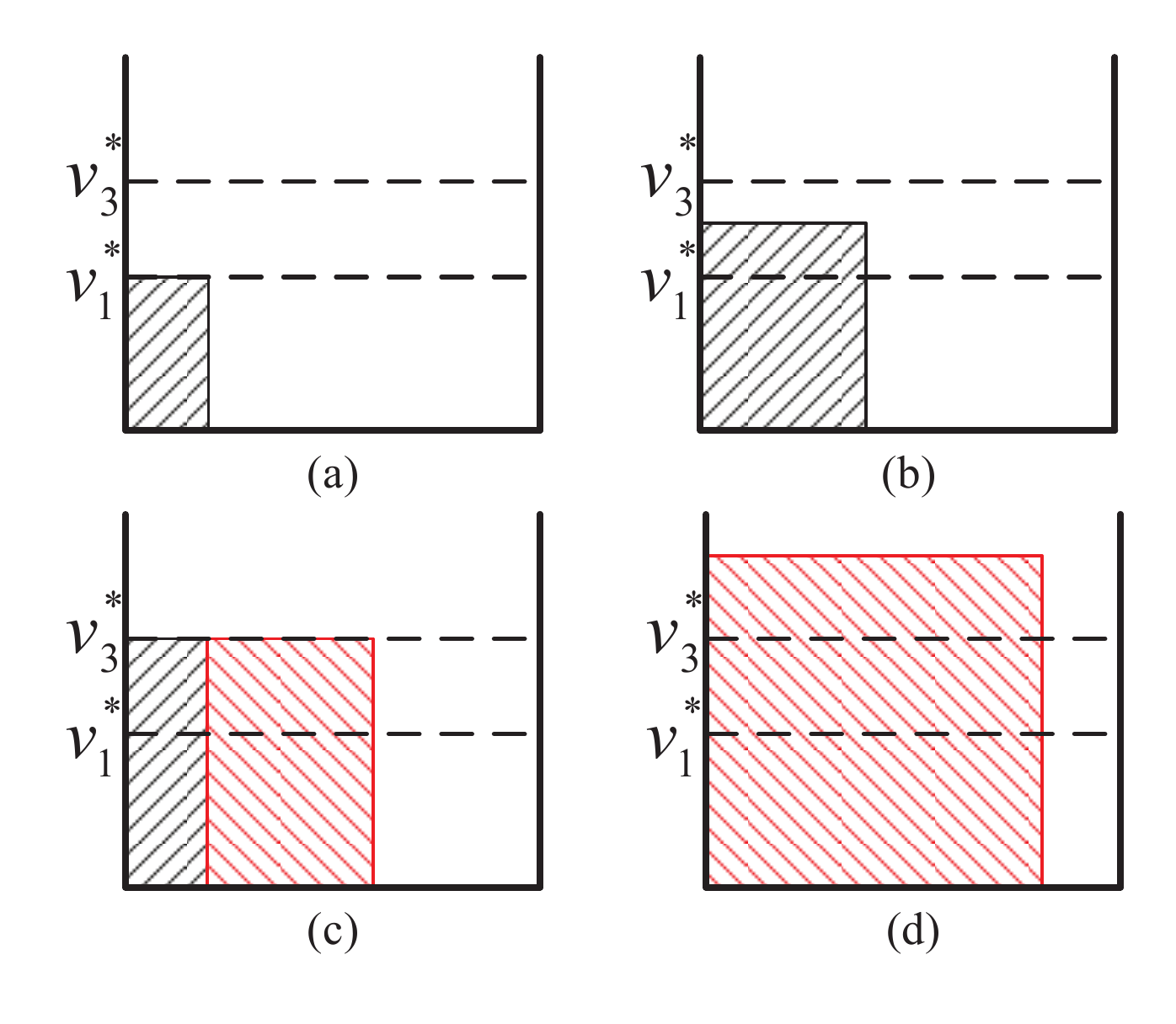}
	\caption{The optimal power control policy when $D \leq R_2$, $\theta_1$ and $p_1$ are represented by the width and height of the black shadowed  block with up diagonal, respectively, $\theta_2$ and $p_2$ are represented by the width and height of the red shadowed block with down diagonal, respectively: (a) $E\leq\frac{DL(v_1^*+\varepsilon_1)}{R_1}$; (b) $\frac{DL(v_1^*+\varepsilon_1)}{R_1} < E \leq \frac{DL(v_3^*+\varepsilon_1)}{R_1}$; (c) $\frac{DL(v_3^*+\varepsilon_1)}{R_1} < E \leq \frac{DL(v_3^*+\varepsilon_2)}{R_2}$; (d) $E>\frac{DL(v_3^*+\varepsilon_2)}{R_2}$. }
	\label{fig:case3}
\end{figure}

When $E<\frac{DL(v_1^*+\varepsilon_1)}{R_1}$, the optimal power control policy is
\begin{align}
\theta_1=\frac{E}{v_1^*+\varepsilon_1}, \quad p_1=v_1^*, \quad \theta_2=0, \quad p_2=0,
\end{align}
as described in Fig. \ref{fig:case3}(a).

When $\frac{DL(v_1^*+\varepsilon_1)}{R_1} < E \leq \frac{DL(v_3^*+\varepsilon_1)}{R_1}$, the optimal power control policy is
\begin{align}
\theta_1=\frac{DL}{R_1}, \quad p_1=\frac{ER_1}{DL}-\varepsilon_1, \quad \theta_2=0, \quad p_2=0,
\end{align}
as described in Fig. \ref{fig:case3}(b).

When $\frac{DL(v_3^*+\varepsilon_1)}{R_1} < E \leq \frac{DL(v_3^*+\varepsilon_2)}{R_2}$, the optimal transmission power $v_3^*$ can be achieved, and the optimal power control policy is
\begin{align}
\theta_1=\frac{(v_3^*+\varepsilon_2)DL-R_2E}{R_1(v_3^*+\varepsilon_2)-R_2(v_3^*+\varepsilon_1)}, \quad \theta_2=\frac{R_1E-(v_3^*+\varepsilon_1)DL}{R_1(v_3^*+\varepsilon_2)-R_2(v_3^*+\varepsilon_1)}, \quad p=v_3^*,
\end{align}
as described in Fig. \ref{fig:case3}(c).

When $E>\frac{DL(v_3^*+\varepsilon_2)}{R_2} $, due to the average fronthaul rate constraint, the transmission duration is limited, we have $p>v_3^*$. As the throughput $H$ decreases with $p$, the transmission power $p$ should be as small as possible. Functional split mode $2$, which has smaller fronthaul rate requirement is selected, and the optimal power control policy is
\begin{align}
\theta_1=0, \quad p_1=0, \quad \theta_2=\frac{DL}{R_2}, \quad p_2=\frac{ER_2}{DL}-\varepsilon_2,
\end{align}
as described in Fig. \ref{fig:case3}(d).

\section{Heuristic Online Policy} \label{sec:online}
{In each block, if the available amount of energy in the battery $E$, the transmission deadline $T$, the channel gain $\gamma$ and the average fronthaul rate $D$ are accurately estimated, the optimal transmission policy can be easily obtained, according to the analyses in Section \ref{sec:single}.
Due to the energy constraint, only blocks with good channel states are used for transmission to improve the energy efficiency. On the other hand, to avoid energy waste introduced by the limited battery size, we prefer to use all energy in the battery before the end of each epoch. In fact, because only blocks with good channel states are used for transmission, the energy can flow to the next epoch if there is no block with good channel states in the current epoch, which guarantees that energy is used in blocks with good channel states.}

To simplify the expression, we define a function $f$ as:
\begin{equation} \label{fun:single}
[\boldsymbol{\theta}, \boldsymbol{p}] = f(E,T,\gamma,D),
\end{equation}
where $\boldsymbol{\theta}=[\theta_1, \theta_2, \cdot \cdot \cdot, \theta_X]$, $\theta_x$ is the optimal transmission duration that functional split mode $x$ is selected, $\boldsymbol{p}=[p_1,p_2,\cdot \cdot \cdot ,p_X]$, and $p_x$ is the optimal transmission power when functional split mode $x$ is selected. We then propose a low-complexity heuristic online algorithm.

\begin{algorithm}
\caption{Heuristic Online Policy}
\label{alg:heuristic}
\begin{algorithmic}
\STATE {Initialize $B_0=0$, $D_0=0$}
\FOR{$n=1, 2, ..., $}
\STATE{1. Update the energy state $B_{(n-1)L+1}=B_{(n-1)L}+E_{(n-1)L+1}$}
\STATE{2. Get the expected number of blocks with good channel states in the current epoch $N_{\text{good}}$, and the average channel gain $\gamma_{\text{avg}}$}
\STATE{3. Update the average fronthaul rate constraint}
\STATE{\begin{align}
d_n=\frac{(n+n_{\text{heu}})D-D_{n-1}}{N_{\text{heu}}} \nonumber
\end{align}}
\STATE{4. Get the power control policy, with $B_{(n-1)L+1}$, $N_{\text{good}}$, $\gamma_{\text{avg}}$, $d_n$ and thus the value of the function in (\ref{fun:single})}
\STATE{5. Transmit with the power control policy $\bf{\theta}$ and $\bf{p}$, update the battery state and the cumulative amount of data transmitted via fronthaul}
\STATE{\begin{align}
&B_{nL} = B_{(n-1)L+1} - \sum_{x=1}^{X}\theta_x(p_x+\epsilon_x), \quad D_n = D_{n-1} + \sum_{x=1}^{X}\theta_xR_x \nonumber
\end{align}}
\ENDFOR
\end{algorithmic}
\end{algorithm}

The detail algorithm is described in Algorithm \ref{alg:heuristic}.  
We evaluate the transmission policy at the beginning of each block, denoted by block $n$ without losing generality. The RRU transmits only when the channel state is good, i.e., the channel gain is larger than a threshold denoted by $\gamma_{\text{th}}$.

At the first step,
evaluate the amount of energy in the battery $B_{(n-1)L+1}$, which can be obtained with the remaining energy in the battery at the end of the last block $B_{(n-1)L}$, and the arrived energy in this block, $E_{(n-1)L+1}$. We have $B_{(n-1)L+1}=B_{(n-1)L}+E_{(n-1)L+1}$.

At the second step, update the expected number of blocks in the remaining time of the epoch with good channel states, denoted by $N_{\text{good}}$, and the average channel gain, denoted by $\gamma_{\text{avg}}$, which can be obtained according to the distribution of the channel gain.
{
Denoted by $\text{Pr}\{n, n_v, w\}$ the probability that there are $n_v$ blocks with state $v$ in the next $n$ blocks, and the channel state in the $n$-th block is $w$.
We have
\begin{align}
\text{Pr}\{n+1, n_v, y\} = \sum_{w}\text{Pr}\{n, n_v, w\}q_{w,y}, \quad y \neq v, \\
\text{Pr}\{n+1, n_v+1, v\} = \sum_{w}\text{Pr}\{n, n_v, w\}q_{w,v}.
\end{align}
Given the channel state $w$ of the first block, if $w = v$, we have $\text{Pr}\{1, n_v = 1, w\} = 1$, and $\text{Pr}\{1, n_v , y \} = 0$ for the other parameters; if $w \neq v$, we have $\text{Pr}\{1, n_v = 0, w\} = 1$, and $\text{Pr}\{1, n_v , y \} = 0$ for the other parameters.
With the initial iteration values and the iterative formula, we can get the distribution of the number of blocks with channel state $v$, i.e., $\sum_{w}\text{Pr}\{N_{\text{r}}, n_v, w\}$, in the remaining $N_{\text{r}}$ blocks.
With distributions of the number of blocks with each channel state, the expected number of blocks with good channel states can be easily obtained.
}

At the third step, update the average fronthaul rate constraint, denoted by $d_n$. We guarantee that the data amount transmitted via fronthaul does not exceed $nLD$ from block 1 to block $n$.
Denote the total amount of transmitted data in the first $n$ blocks as $D_n$. To guarantee that the average fronthaul rate constraint in the first $n+n_{\text{heu}}$ blocks, the amount of data transmitted via fronthaul in the next $n_{\text{heu}}$ blocks should not exceed $(n+n_{\text{heu}})D-D_{n-1}$, where $n_{\text{heu}}$ is a constant number of blocks used in the heuristic algorithm. In the next $n_{\text{heu}}$ blocks, the expected number of blocks with good channel states is denoted as $N_{\text{heu}}$. The average fronthaul rate constraint is estimated as $d_n=\frac{(n+n_{\text{heu}})D-D_{n-1}}{N_{\text{heu}}}$.

At the fourth step, get the power control policy including the transmission duration $\boldsymbol{\theta}$ and the transmission power $\boldsymbol{p}$, with function (\ref{fun:single}).

At the fifth step, transmit with the policy, update the energy in the battery at the end of the block, and the amount of data transmitted via fronthaul in the first $n$ blocks, i.e.,
\begin{align}
B_{nL} = B_{(n-1)L+1} - \sum_{x=1}^{X}\theta_x(p_x+\epsilon_x), \quad D_n = D_{n-1} + \sum_{x=1}^{X}\theta_xR_x.
\end{align}

\section{Numerical Results} \label{sec:num}
{
We consider the downlink transmission of  an energy harvesting RRU, where the baseband processing is according to the LTE protocol, as shown in  Fig. \ref{fig:functionalsplit}.
The baseband functions at the RRU and the BBU are realized with general purpose processors via function virtualization.
Three candidate functional split modes are considered, including: mode 1, which splits between RF and IFFT, is the classical CPRI functional split; mode 2, which splits between resource mapping and precoding, is a reference split by eCPRI \cite{eCPRI}; mode 3, which splits between RLC and PDCP, is a reference split by 3GPP.
The RRU has one antenna, one carrier component, the bandwidth of the air interface is set as 20MHz, and the sampling rate is 30.72 MHz. We assume that there is only 1 user per TTI, occupying all PRBs. The highest modulation order is 64QAM. When the RRU works in different functional split modes,
the corresponding required fronthaul rates are $R_1=983$Mbps, $R_2=466$Mbps and $R_3=151$Mbps, respectively \cite{SplittingBS}. The corresponding processing powers of the RRU are $\varepsilon_1=2$W, $\varepsilon_2=4$W, $\varepsilon_3=5$W respectively, according to the downlink power model in \cite{LTEmodel}.
}

{
We assume that each slot lasts 10 seconds, and set that each block has $L=4$ slots, each epoch has $N=2$ blocks.
The RRU is powered with renewable energy.
The harvested energy can be used after being stored in the battery with capacity $B_{\text{max}}=1000$J. The initially stored energy in the battery is 0 J.
Without loss of generality, we assume that the energy arrival is a Poisson process, which can be used to model the solar panel or wind generation \cite{energymodel}, The distribution of the amount of arrived energy in each epoch is
\begin{equation}
\text{Pr}\{E_m=ANL\}=\frac{E_{{\text{avg}}}^A}{A!}e^{-E_{\text{avg}}},
\end{equation}
where $E_{\text{avg}}$ is the average energy arrival rate normalized by the number of slots in each epoch.
}

We consider the channel gain between each user and the RRU follows Rayleigh channel distribution, with average channel gain $\gamma_\text{avg}=2$/W. The channel gain is discrete into $G=4$ consecutive intervals without overlapping, and the probability that the channel gain is in each interval is $\frac{1}{G}$. The channel gain in each interval is represented by its average value, denoted the channel gain in the $g$-th interval by $\gamma_g$. Assume the channel gain of different users are i.i.d., the best channel gain of the users in each block $\gamma_{\text{best}}$  follows:
\begin{equation}
\text{Pr}\{\gamma_{\text{best}} = \gamma_g \}=\left( \frac{g}{G} \right)^U - \left( \frac{g-1}{G} \right)^U, \quad 1 \leq g \leq G,
\end{equation}
where $U$ is the number of users. We consider the scenario where $U=2$, and the corresponding probability of each interval is $[\frac{1}{16}, \frac{3}{16},\frac{5}{16},\frac{7}{16}]$.

We first study the offline throughput maximization problem.
{
The solution obtained with continuous variables is denoted by the \emph{`upper bound'}, and the solution after rounding is denoted by \emph{`relax and round'}.
}
The relationship between the throughput and the average fronthaul rate is presented in Fig. \ref{fig:offline} and Fig.\ref{fig:online} with `relax and round' and optimal online policy when $E_{{\text{avg}}}=5$W, respectively. We can see that the throughput grows rapidly with the average fronthaul rate when the fronthaul rate is small, and the growth slows down when the fronthaul rate gets large. When the average fronthaul rate is small, the performance of fixing functional split mode as mode 3 can achieve similar performance with the flexible functional split, because fronthaul is the main constraint in this scenario. When the average fronthaul rate is large, fixing functional split mode as mode 1 can achieve close performance with the flexible functional split, because the energy is the main constraint in this scenario, and functional split mode 1 requires the lowest processing power.

{Throughput of the `upper bound', the `relax and round', optimal online policy and heuristic online policy are  compared in Fig. \ref{fig:compare}. We can see that the `relax and round' has close performance with the `upper bound', which means that ``relax and round'' performs very close to the optimal solution. } The heuristic online policy has similar performance with the optimal online policy. The heuristic online policies with fixed functional split mode are adopted as baselines to show the benefit of flexible functional split, as shown in Fig. \ref{fig:heuristic}. We can see that with flexible functional split, the heuristic online policy have better performance than any fixed functional split mode.
%

\begin{figure}[H]
	\begin{minipage}[t]{0.45\textwidth}
		\centering
		\includegraphics[scale=0.5]{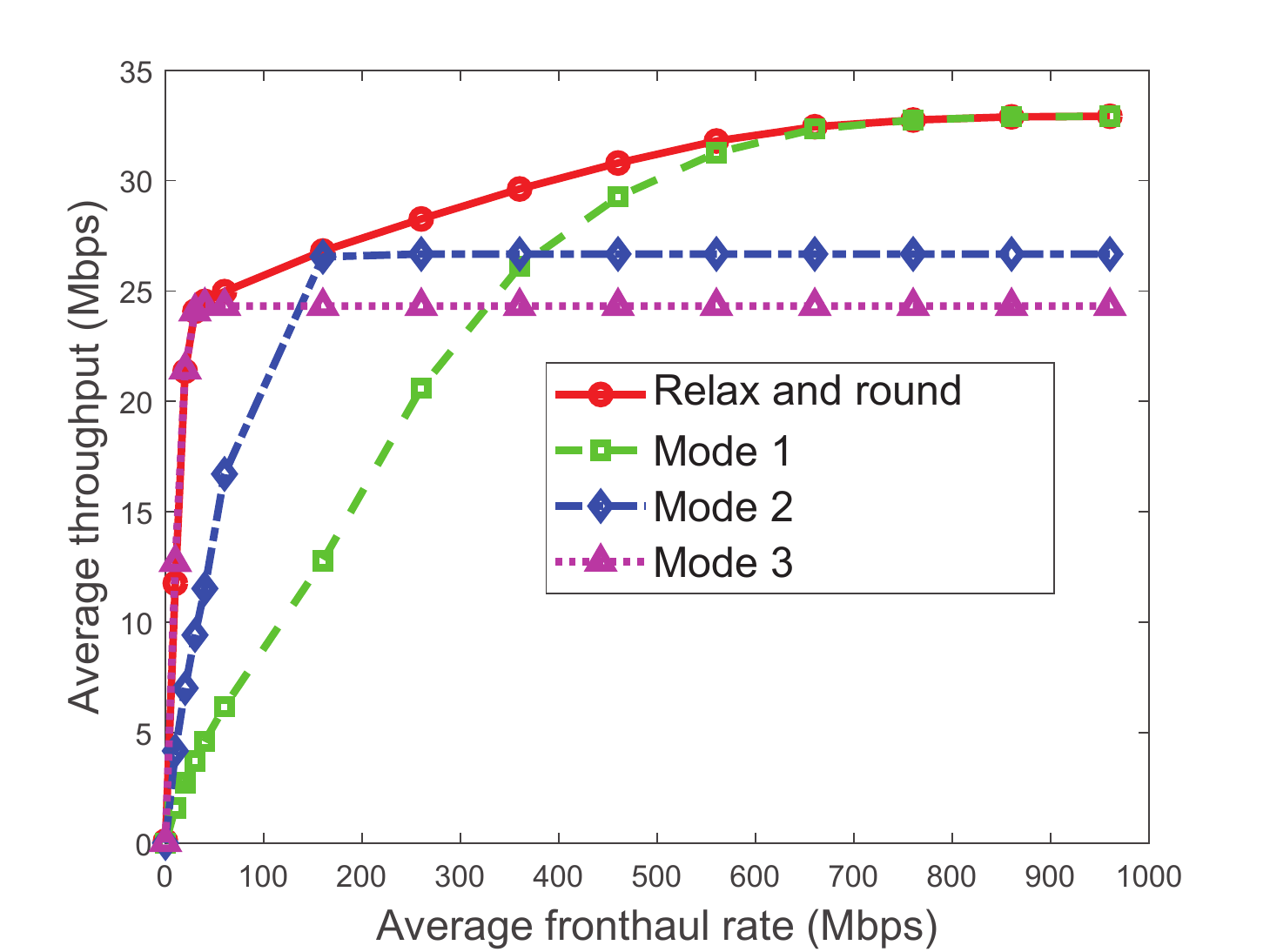}
	\caption{{Comparison of the `relax and round' with flexible functional split and fixed functional split under different average fronthaul rate constraints.}}
	\label{fig:offline}
	\end{minipage}
	\qquad
	\begin{minipage}[t]{0.45\textwidth}
		\centering
		\includegraphics[scale=0.5]{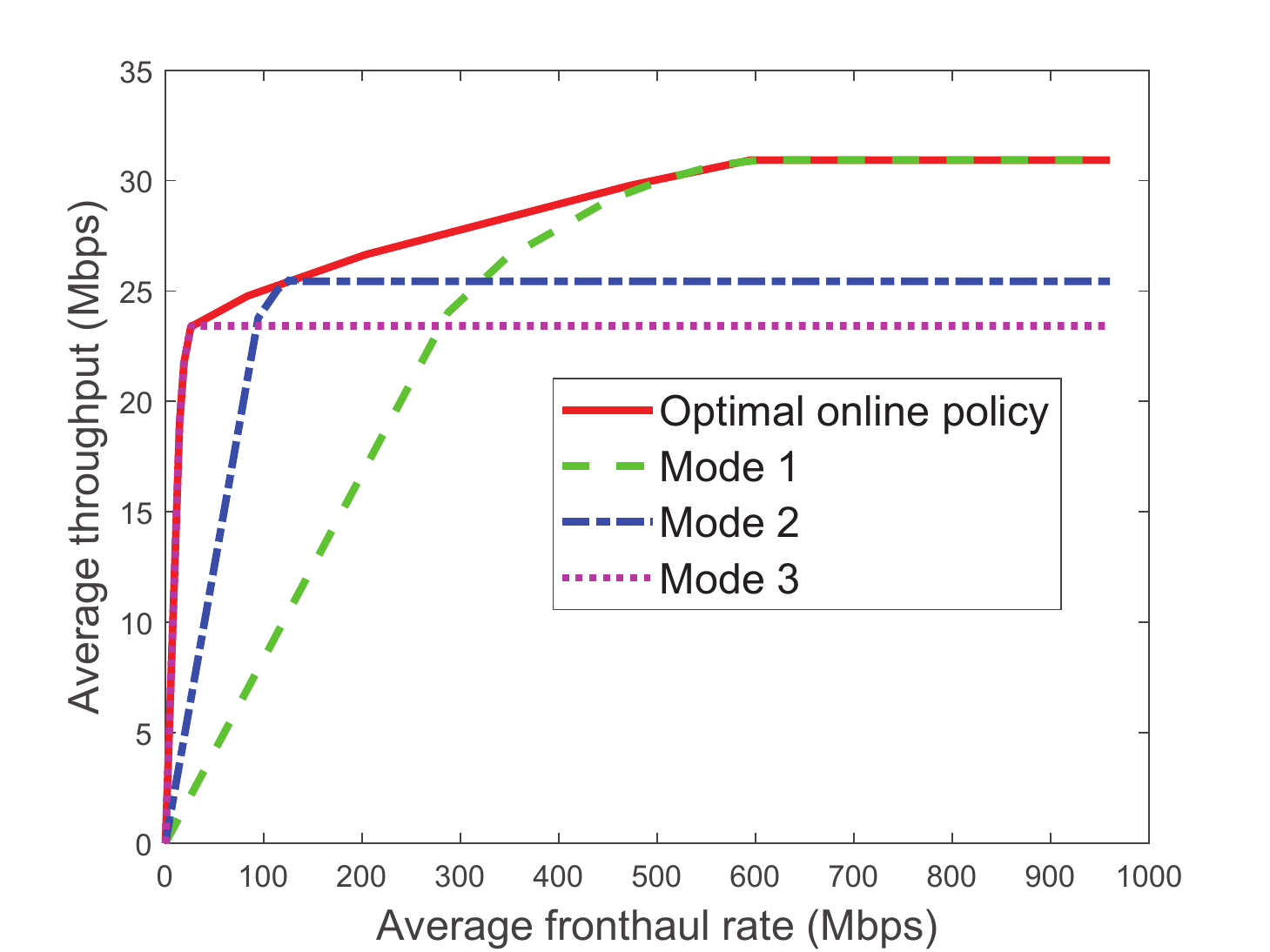}
	\caption{Comparison of the online policies with flexible functional split and fixed functional split under different average fronthaul rate constraints.}
	\label{fig:online}
	\end{minipage}
\end{figure}

\begin{figure}[H]
	\begin{minipage}[t]{0.45\textwidth}
		\centering
		\includegraphics[scale=0.5]{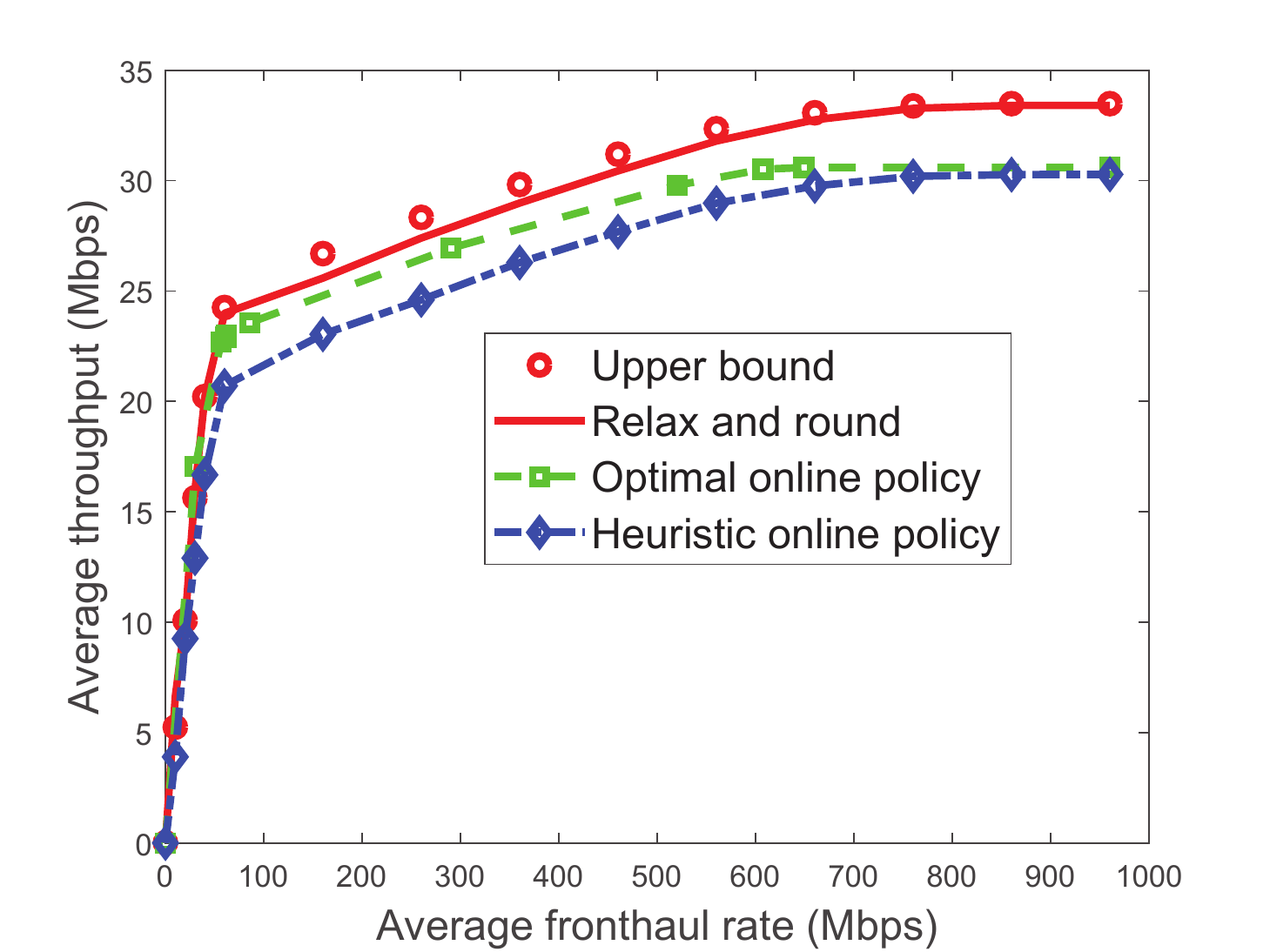}
	\caption{{Comparison of the `upper bound', the `relax and round', optimal online policy and heuristic online policy under different average fronthaul rate constraints.}}
	\label{fig:compare}
	\end{minipage}
	\qquad
	\begin{minipage}[t]{0.45\textwidth}
		\centering
		\includegraphics[scale=0.5]{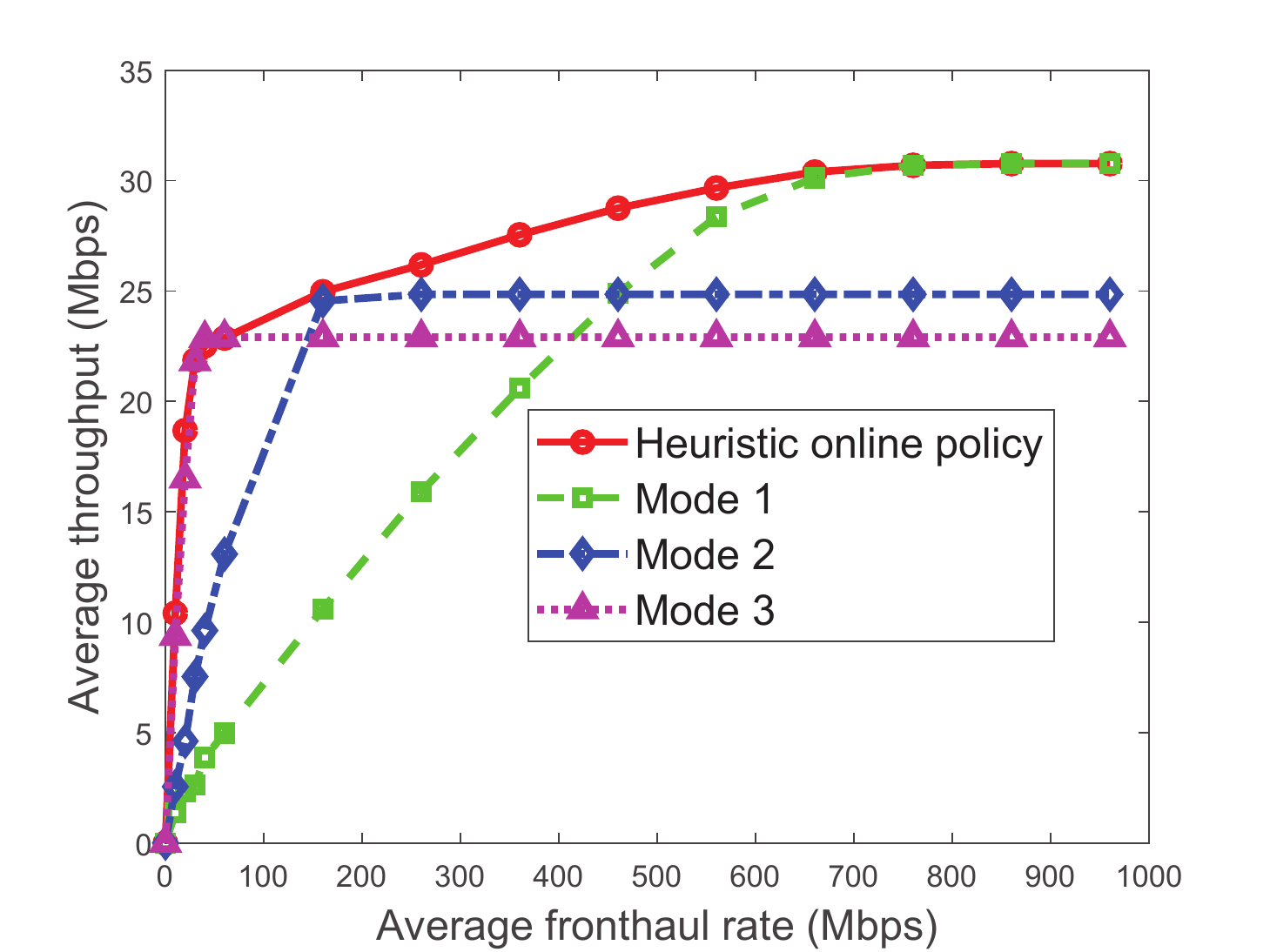}
	\caption{Comparison of the heuristic online policies with flexible functional split and fixed functional split under different average fronthaul rate constraints.}
	\label{fig:heuristic}
	\end{minipage}
\end{figure}

%

To show the effects of the energy arrival rate $E_{\text{avg}}$, the relationship between the average throughput and the energy arrival rate is given in Fig. \ref{fig:energy_compare}, where the average fronthaul rate constraint is 360Mbps. The throughput increases with the energy arrival rate, for both the flexible functional split and the functional split with fixed mode. When the energy arrival is small, the throughput increases approximate linearly with the energy arrival rate. Because in this scenario, the time used for transmission is short, and the energy is the main constraint, rather than the average fonthaul rate and the channel states. However, due to the constraints of average fronthaul rate, functional split modes with smaller fronthaul rate requirement should be selected if longer transmission time is needed, which means the processing power is larger. On the other hand, the number of blocks with good channel states is limited, which means that the available transmission time with good channel states is limited. When the energy arrival rate is large, the increasing of throughput slows down due to the average fronthaul rate constraint and the limited number of blocks with good channel states. With flexible functional split, the throughput is larger compared with the fixed functional split modes.

%

\begin{figure}[H]
	\begin{minipage}[t]{0.45\textwidth}
		\centering
		\includegraphics[scale=0.5]{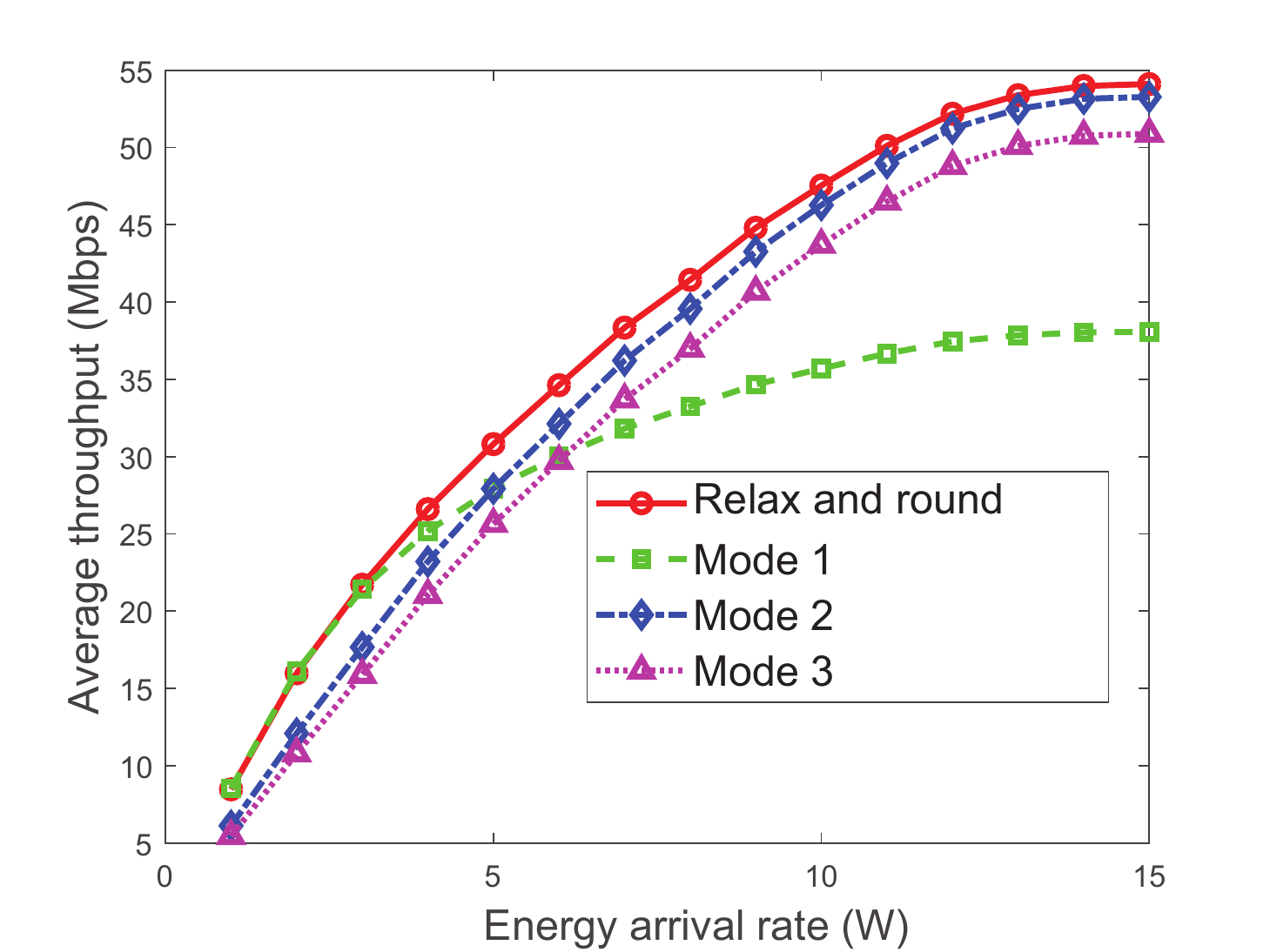}
	   \caption{{Comparison of the `relax and round' with flexible functional split and fixed functional split under different energy arrival rate.}}
	   \label{fig:energy_compare}
	\end{minipage}
	\qquad
	\begin{minipage}[t]{0.45\textwidth}
		\centering
		\includegraphics[scale=0.5]{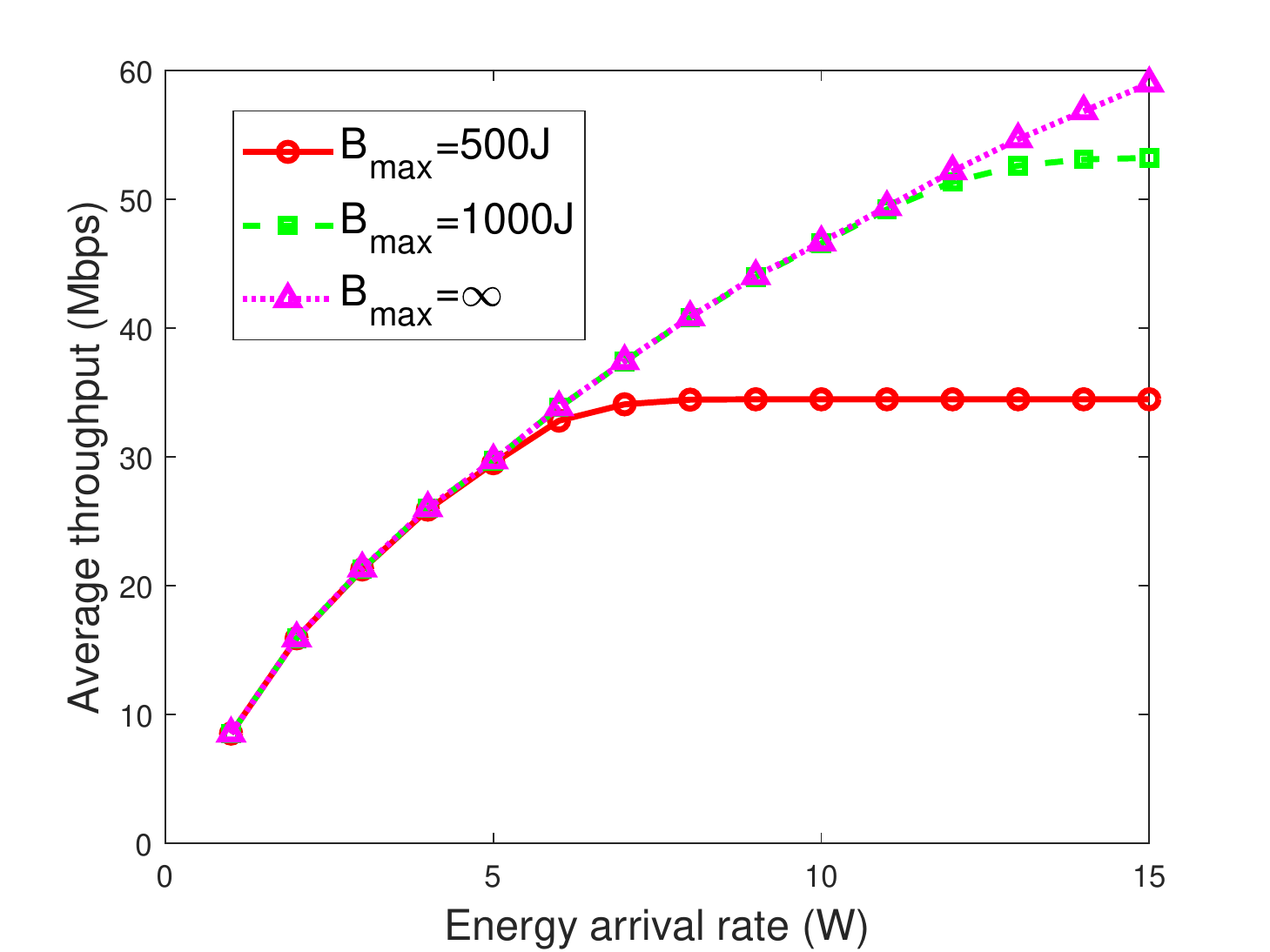}
	   \caption{{Throughput versus the energy arrival rate with flexible functional split under different battery size.}}
	   \label{fig:energy}
	\end{minipage}
\end{figure}

The relationship between the average throughput and the energy arrival rate with flexible functional split is given under different battery sizes in Fig. \ref{fig:energy}. We can see that when the energy arrival rate is small, the throughput with different battery sizes are almost the same. When the energy arrival rate is large, a small battery size leads to a larger probability of energy overflow, and less energy can transfer among different epochs, which results in a smaller throughput.

\section{Conclusions}
\label{sec:conclusion}
In this paper, we have studied the selection of the optimal functional split modes, and the corresponding transmission duration and transmission power with each mode, to maximize the throughput given the average fronthaul rate in C-RAN with renewable energy powered RRU. The optimal offline policy has the property that at most two modes should be selected in each epoch, and the sum of the transmission power and the reciprocal of the channel gain are the same for the selected functional split modes. Numerical results show that with flexible functional split, the throughput can be notably improved compared with any mode with fixed functional split. To deal with the curse of dimensionality of the online MDP problem, We derive the closed-form expression of the optimal power control policy in the scenario with one instance of energy arrival and two candidate functional split modes. We then propose a heuristic online algorithm, and numerical results show that the proposed heuristic  online policy  has similar performance with the optimal online policy.
{In the future, the optimal policy with multiple carriers will be explored, and we will further study the scenarios with multiple RRUs and random packet arrivals.}

\bibliographystyle{IEEEtran}
\bibliography{IEEEabrv,ref_energyharvesting}

\end{document}